\documentclass[11pt]{article}

\RequirePackage[OT1]{fontenc}
\usepackage{amsthm,amsmath,natbib}
\usepackage{multirow}
\RequirePackage[colorlinks,citecolor=blue,urlcolor=blue]{hyperref}




\def\frac#1#2{{\textstyle{#1\over#2}}}

\DeclareSymbolFont{AMSb}{U}{msb}{m}{n}
\DeclareMathSymbol{\Natural}{\mathbin}{AMSb}{"4E}
\DeclareMathSymbol{\Integer}{\mathbin}{AMSb}{"5A}
\DeclareMathSymbol{\Real}{\mathbin}{AMSb}{"52}
\DeclareMathSymbol{\Rational}{\mathbin}{AMSb}{"51}
\DeclareMathSymbol{\Imaginary}{\mathbin}{AMSb}{"49}
\DeclareMathSymbol{\Complex}{\mathbin}{AMSb}{"43} 
\DeclareMathSymbol{\Disk}{\mathbin}{AMSb}{"44} 
\def\bi{\begin{itemize}}
\def\ei{\end{itemize}}
\def\bd{\begin{description}}
\def\ed{\end{description}}
\def\ben{\begin{enumerate}}
\def\een{\end{enumerate}}

\def\hat#1{{\widehat{#1}}}




\def\2to{{\ {\buildrel 2\over \longrightarrow}\ }}

\def\I1ton{{$I_1,\ldots,I_n$}}
\def\X1ton{{$X_1,\ldots,X_n$}}
\def\Y1ton{{$Y_1,\ldots,Y_n$}}
\def\Z1ton{{$Z_1,\ldots,Z_n$}}
\def\R1ton{{$R_1,\ldots,R_n$}}
\def\e1ton{{$e_1,\ldots,e_n$}}
\def\t1ton{{$t_1,\ldots,t_n$}}
\def\x1ton{{$x_1,\ldots,x_n$}}
\def\y1ton{{$y_1,\ldots,y_n$}}
\def\z1ton{{$z_1,\ldots,z_n$}}




%
%
%
%
%
%
%

%





%
%
%
%
%
%
%
%
%

%

\newtheorem{defn}{Definition}

\newtheorem{lemma}[defn]{Lemma}
\newtheorem{example}[defn]{Example}

\usepackage[english]{babel}
\usepackage{amsfonts}
\usepackage{amssymb}
\usepackage{graphicx}
\usepackage{enumerate}
\usepackage{bm}

\setlength{\oddsidemargin}{-0.125in} \setlength{\topmargin}{-0.5in}
\setlength{\textwidth}{6.5in} \setlength{\textheight}{9in}

\setlength{\textheight}{9in} \setlength{\textwidth}{6.5in}
\setlength{\topmargin}{-36pt} \setlength{\oddsidemargin}{0pt}
\setlength{\evensidemargin}{0pt} \tolerance=500

\pdfminorversion=4


\def\bi{\begin{itemize}}
\def\ei{\end{itemize}}
\def\be{\begin{equation}}
\def\ee{\end{equation}}

\begin{document}
\thispagestyle{empty}
\baselineskip=28pt
\vskip 5mm
\begin{center} {\Large{\bf Spatial random field models based on L\'evy indicator convolutions}}
\end{center}

\baselineskip=12pt

\vskip 5mm

\newtheorem{definition}{Definition}
\newtheorem{proposition}{Proposition}

\begin{center}
\large
Thomas Opitz$^1$
\end{center}

\footnotetext[1]{
\baselineskip=10pt  BioSP, INRA, 84914, Avignon, France. E-mail: Thomas.Opitz@inra.fr}

\baselineskip=17pt
\vskip 4mm
\centerline{\today}
\vskip 6mm

\setlength{\bibsep}{1pt}
\begin{center}
{\large{\bf \noindent Abstract }}
\end{center}
Process convolutions  yield random fields with flexible marginal distributions and dependence beyond Gaussianity, but statistical inference is often  hampered by a lack of closed-form marginal distributions, and  simulation-based inference may be prohibitively computer-intensive. We here remedy such issues through  a class of process convolutions  based on smoothing a ($d+1$)-dimensional L\'evy basis  with an indicator function kernel  to construct a $d$-dimensional convolution process.  Indicator kernels ensure univariate distributions in the  L\'evy basis family, which provides a sound basis for  interpretation, parametric modeling  and statistical estimation. We propose a class of isotropic stationary convolution processes constructed through hypograph indicator sets defined  as the space  between the curve $(s,H(s))$ of a spherical probability density function $H$ and the plane $(s,0)$.  If $H$ is radially decreasing, the covariance is expressed  through the univariate distribution function of $H$. The bivariate joint tail behavior in such convolution processes  is studied in detail. Simulation and modeling extensions beyond isotropic stationary spatial models are discussed, including latent process models. For statistical inference of parametric models, we  develop  pairwise likelihood techniques and illustrate these on  spatially indexed weed counts in the Bjertop data set,  and on daily wind speed maxima observed over $30$ stations in the Netherlands. 
\par\vfill\noindent
{\bf Keywords:} composite likelihood; covariance function; extremal dependence; kernel
convolution; L\'evy basis; non-Gaussian geostatistics\\


\pagenumbering{arabic}


\newpage



\section{Introduction}\label{sec:introduction}
Gaussian processes are widely used in spatial statistics thanks
to their nice theoretical and statistical properties, but they 
may present critical shortcomings when features such as non-Gaussian marginal
distributions,  asymmetry in univariate and multivariate lower and upper tails  or relatively strong  dependence at extremal levels  are crucial when modeling data. 
Some of these shortcomings may be remedied by applying marginal transformations  or by embedding a random variable for  the mean or the scale of the process  \citep[e.g.,][]{Roislien.Omre.2006,Krupskii.al.2015,Huser.al.2017},  but then statistical inference may become more challenging, and the model may not be identfiable if spatial data are not replicated in time. 
Hierarchical models including unobserved  Gaussian components have become popular with Bayesian inference approximating the posterior distribution \citep[e.g.,][]{Rue.al.2009,Banerjee.al.2014}, but  we often  lack closed-form distribution functions and densities.  Moreover, such models often rely on conditional independence assumptions with respect to latent components, leading to dependence strength that is even weaker than the Gaussian one. Spatial copula approaches \citep{Joe.2014}  propagate a full separation  of marginal distributions
and of dependence in spatial data. This leads to considerable flexibility,  but then interpreting covariance structure becomes less intuitive, inference may be tricky when some parameters influence  both margins and dependence, and one may view this as abandoning to seek modeling of the true data generating mechanism \citep{Mikosch.2006}. 

To remedy such modeling issues,  process convolution methods have been proposed as a powerful alternative  to generate models for stochastic processes with flexible and complex dependence
structures \citep[\emph{e.g.},][]{Higdon.1998,Brix.1999,Higdon.2002,Hellmund.al.2008,Calder.2008}.
However, in practice, most approaches focus on
Gaussian processes, whose convenient 
stability properties preserves  multivariate Gaussian distributions in the
convolution process. Non-Gaussian source processes have been used
 \citep[\emph{e.g.},][]{Kozubowski.al.2013,Jonsdottir.al.2013,BarndorffNielsen.al.2014,Bousset.al.2015,Noven.al.2015}, but they are often restricted to
simulation-based inference to remedy the absence of
closed-form expressions for  marginal distributions, whose interpretation
also becomes less intuitive.
In contrast, our aim here is  to  develop a general non-Bayesian framework of 
spatial process convolution models with flexible and easily interpretable marginal
distributions and dependence structures, amenable to inference based on 
covariograms or composite likelihoods. 


The general idea of kernel convolutions is to smooth a random  measure $L(\cdot)$ through a nonnegative kernel  $k$ to generate the convolution process $X(s)=\int k(s,y)L(\mathrm{d}y)$. The source measure $L$ assigns measure $L(A)=\int_A L(\mathrm{d}s)$ to bounded Borel sets $A$.  For the source process $L$ to behave like white noise over  continuous space,  it must be a \emph{L\'evy process} characterized by independent increments \citep{Sato.1999,BarndorffNielsen.al.2012}, and the collection of random variables $L(A)$ for abritrary sets $A$ is called a \emph{L\'evy basis}. 
The random variables $L(A)$  have infinitely divisible distribution, for instance  the stable distributions (Gaussian, Cauchy, ...) or the gamma, inverse Gaussian, compound Poisson, Poisson  or negative binomial distributions. Conversely, to construct a white noise source process over continuous space,   we may fix  an infinitely divisible distribution $F_A$  for the random measure $L(A)$ of  a fixed bounded Borel set $A$, and thanks to the infinitely divisible structure we can then define a stationary and isotropic white noise process $L(\cdot)$. The property of infinite divisibility  allows us to abstract away  from the any grid discretization of space. 
Our key proposal here is to use stationary kernels defined as indicator functions $k(s,y)=1_A(s-y)$ for some fixed indicator set $A$, such that the margins of the convolution process  have distribution in the Levy basis:  $X(s)\sim F_A$. Therefore, tractable L\'evy basis distributions $F_A$ are preserved in the convolution process, which is not be possible when using more general kernels on L\'evy bases such as the gamma or inverse Gaussian ones, or on any other L\'evy basis that is not stable. Moreover, modeling count data requires an integer-valued support; using discrete L\'evy basis such as the Poisson or negative binomial ones in our indicator kernel approach allows generating integer-valued spatially dependent processes. 

This flexible spatial modeling framework using indicator kernels may be viewed as an extension to  \emph{trawl processes}, which have been proposed for time series modeling
 \citet{BarndorffNielsen.al.2014,Noven.al.2015,Grahovac.al.2017}. To go beyond the
very limited range of dependence structures in $\mathbb{R}^d$ when smoothing a L\'evy basis defined in $\mathbb{R}^d$ with an indicator kernel, we will instead use a L\'evy basis in 
 $\mathbb{R}^{d+1}$ with an indicator set defined as a \emph{hypograph}, i.e.,  as the  area between the hyperplane  $s_{d+1}=0$ and the surface $(s,H(s))$ with $s=(s_1,\ldots,s_d)^T$ for
 a nonnegative function $H$ called  \emph{height function}. Radially symmetric height functions $H$ correspond to densities of a
 spherical probability distribution, which provides easily interpretable  dependence models. If further $H$  is non-increasing,we get nice and simple relationships in terms of the univariate distribution function of $H$ between the set covariance function of the hypograph, the covariance function of $X(s)$ and the tail correlation function. 

Section \ref{sec:levy} presents generalities of the  L\'evy convolution approach. Modeling with   hypographs is developed in Section \ref{sec:hypograph}; simulation techniques are shortly discussed in Section \ref{sec:simulation}.  Section \ref{sec:tail}   is dedicated to  theoretical results on the tail behavior of the convolution process.    Extensions to anisotropy  and space-time modeling are treated in Section \ref{sec:specificmodels}, which also puts forward several flexible latent process models. We develop parametric inference  with a focus on pairwise likelihood in Section \ref{sec:inference}. The two applications of Section \ref{sec:application} demonstrate the versatility of our modeling framework, one to weed counts (Bjertop farm data),  the other to Netherlands daily wind speed measurements. Concluding remarks and some perspectives are summarized in Section \ref{sec:discussion}. Proofs are deferred to the Appendix. 
\section{L\'evy convolution processes}
\label{sec:levy}
\subsection{L\'evy bases}
\label{sec:levybasis}
A L\'evy basis $L$ is  a collection of random variables indexed by sets. 
  We denote the Borel-$\sigma$-algebra  in $\mathbb{R}^d$  by $\mathcal{B}(\mathbb{R}^d)$, and $\mathcal{B}_b(\mathbb{R}^d)$ refers to  its  sets with finite Lebesgue measure. We just write $\mathcal{B}$ and $\mathcal{B}_b$ respectively when the reference to $\mathbb{R}^d$ is clear.  We call \emph{random noise} any collection of $\mathbb{R}$-valued random variables $\{L(A)\mid A \in \mathcal{B}_b\}$  possessing the following property of additivity for any series of disjoint sets $A_j\in \mathcal{B}_b$, $j=1,2,\ldots,$ satisfying $\bigcup_{j=1}^\infty A_j \in \mathcal{B}_b$: 
\begin{equation}
L\left(\bigcup_{j=1}^\infty A_j\right)= \sum_{j=1}^\infty L(A_j). 
\end{equation}
More specifically, a \emph{L\'evy basis} $L$ on $\mathbb{R}^d$ is a random
noise  
satisfying two conditions: 1) independent scattering: $L(A_1)\perp L(A_2)$ if $A_1\cap
  A_1=\emptyset$ for $A_1,A_2\in\mathcal{B}_b$; (2) infinite divisibility: for any $n\in\mathbb{N}$, we can
  represent $L$ as the sum of $n$ independent and identically distributed (iid) random measures
  $L_{n,1},\ldots,L_{n,n}$ such that $L\stackrel{d}{=}L_{n,1}+\ldots+L_{n,n}$. The restriction  to bounded sets $\mathcal{B}_b$ avoids handling infinite volumes.  Since bounded sets generate the full $\sigma$-algebra $\mathcal{B}$, no ambiguities or loss of generality arise.
The probability distributions $F_A$ of $L(A)$ in a L\'evy basis are infinitely divisible; i.e., for any $n\in\mathbb{N}$, a distribution $F_{n,A}$ exists such that $F_A$ is the $n$-fold convolution of $F_{A,n}$, written $F_A=F_{A,n}^{n\star}$. The L\'evy basis $L$ is stationary if $L(s+A)=L(A)$ for any
$s\in\mathbb{R}^d$ and $A\in\mathcal{B}_b$ with $s+A=\{s+y\mid y\in A\}$.   In
addition to being stationary, it is isotropic if $L(RA)=L(A)$ with
$R\in\mathbb{R}^{d\times d}$
an orthogonal rotation matrix  and  $RA=\{Rs,\ s\in A\}$.  Any setwise sum of  L\'evy bases is again
a L\'evy basis. 

For statistical applications, we focus on L\'evy bases for which $L(A)$ has tractable distribution $F_A$ for sets of interest $A$. Well-known parametric form of $F_A$ for any $A$ is given by infinitely divisible distribution families that are closed under convolution of
two iid random variables, such as the Gaussian, gamma, Cauchy, stable, Poisson or negative binomial families. The list of  infinitely
divisible parametric distributions with well-known expressions only for specific choices of $A$ is much longer, containing distributions such as the
student's $t$, Pareto, Gumbel, Fr\'echet, lognormal, inverse gamma and heavy-tailed Weibull ones. 
Other interesting constructions of infinitely divisible distributions do not possess a nice analytical form of the distribution function for any $A\not=\emptyset$; this is the case for most of the compound Poisson distributions defined through a sum of iid random variables with a Poisson-distributed number of terms. 
The characteristic function $\varphi(t;L(A))=\exp(itL(A))$  of the variables $L(A)$ in a  stationary L\'evy basis $L$ obeys the following L\'evy-Khintchine formula: 
\begin{equation}\label{eq:chf}
\varphi(t;L(A))=\exp\left(|A|\left[ita-bt^2/2+ \int_{\mathbb{R}} \left\{\exp(ity)-1-ity1_{[-1,1]}(y)\right\}  \eta(\mathrm{d}y) \right]\right), 
\end{equation}
with $|A|$ the hypervolume of $A$, and deterministic parameters  $a\in\mathbb{R}$, $b\geq 0$  and  $\eta(\cdot)$ the  so-called \emph{L\'evy measure} satisfying $\eta(\{0\})=0$ and $\int_{\mathbb{R}} \min(y^2,1))\eta(\mathrm{d}y)<\infty$.
Conversely, any infinitely divisible random variable $L'$ has a representation as in \eqref{eq:chf}, and we can use it as \emph{L\'evy seed} to define an isotropic stationary L\'evy basis by fixing $L(A_1)\stackrel{d}{=}L'$ if $|A_1|=1$; then Levy variables $L(A)$ for any set $A$ are determined by their characteristic function $\varphi(t;L')^{|A|}$. Given a stationary L\'evy basis $L$, we will use the notation $L'$ to designate a variable possessing same distribution as $L([0,1]^d)$, and $F'$  denotes the distribution of $L'$. In the following, we will assume that L\'evy bases are isotropic and stationary if not explicitly stated otherwise. 

 In \eqref{eq:chf}, the additive structure of $\log\varphi(t;L(A))$ reveals an additive decomposition of $L(A)$, and  of the L\'evy basis more generally, into three components. The value $a |A|$ is  deterministic position parameter of  $F_A$. If not explicitly stated otherwise, we assume $a\equiv 0$ in the following. The value of  $b$ characterizes the variance of an additive Gaussian component $W(A)\sim \mathcal{N}(0,b\,|A|)$ in  $L(A)\stackrel{d}{=}W(A)+\tilde{L}(A)$. The complementary additive term  $\tilde{L}(A)$ is independent of $W(A)$ and can be represented as the sum of  the atom masses in $A$ when considering a  mixture of Poisson point processes with atoms of mass $y$; we here refer to the latter component as the \emph{pure jump part} of the L\'evy basis. 

Examples of L\'evy bases are the Gaussian basis with  $\eta\equiv 0$ (and $a=0$ if it is centered), the Poisson basis with $b=0$, $\eta(\mathrm{d}x)=\lambda\delta_1(x)\,\mathrm{d}x$ with intensity parameter $\lambda>0$, or the gamma basis with $a=b=0$ and $\eta(\mathrm{d}x)=\alpha x^{-1}\exp(-\beta x)\,\mathrm{d}x$ where $\alpha>0$ is the shape and $\beta>0$ the rate. We give more details on the gamma and inverse Gaussian L\'evy bases, which will allow  constructing nonnegative spatially dependent processes using our indicator kernel approach.  The distributions in these L\'evy bases  are not stable since the convolution $F(\cdot)\star F(c\ (\cdot))$ is part of the basis only if $c=1$, but not for general values $c>0$. 
A stationary gamma L\'evy basis satisfies
$L(A)\sim \Gamma(\alpha|A|,\beta)$ with shape $\alpha|A|$  and rate $\beta>0$; it    has characteristic function
\begin{equation}
\varphi_{\Gamma}(t;L(A))=\left(1 \,-\, \frac{i\,t}{\beta}\right)^{-\alpha\, |A|}.
\end{equation}
A stationary inverse gaussian L\'evy basis satisfies $L(A)\sim \mathrm{IG}(\lambda|A|^2,\mu|A|)$ with shape $\lambda|A|^2$ ($\lambda>0$)  and mean $\mu|A|>0$ ($\mu>0$); it has  characteristic function   \citep{Chhikara.Folks.1988} 
\begin{equation}\label{eq:cfinvgauss}
\varphi_{\mathrm{IG}}(t;L(A))=\exp\left[|A|\mu{\frac{\lambda}{\mu^2}\left(1-\sqrt{1-\frac{2\mu^2\mathrm{i}t}{\lambda}}\right)}\right];
\end{equation}
modifying $A$ changes the mean to $\mu|A|$, while $\lambda/\mu^2$ is invariant.
For count data with overdispersion, the negative binomial (NB) distribution is an appropriate candidate, here parametrized as  $\mathrm{NB}(\mu,\theta)$ with the overdispersion parameter $\theta>0$ and its mean $\mu$; its variance is $\mu+\mu^2 /\theta$. The convolution of $\mathrm{NB}(\mu_1,\theta_1)$  and $\mathrm{NB}(\mu_2,\theta_2)$ with $\mu_1,\mu_2,\theta_1,\theta_2>0$ and $\mu_2/\mu_1=\theta_2/\theta_1$ is   $\mathrm{NB}(\mu_1+\mu_2,\theta_1+\theta_2)$, \emph{i.e.}, the convolution is additive in both parameters. A stationary NB L\'evy basis satisfies $L(A)\sim \mathrm{NB}(|A|\mu,|A|\theta)$ with  characteristic function 
\begin{equation}\label{eq:nbdens}
\varphi_{\mathrm{NB}}(t;L(A))=\left(\frac{1-p}{1-p\exp(it)}  \right)^{\theta\,|A|}, \qquad p=\mu/(\mu+\theta).
\end{equation}

\subsection{Spatially dependent processes through kernel convolution}
\label{sec:kernel}
Process convolutions using a L\'evy basis $L$ on $\mathbb{R}^d$ and an integrable 
kernel function $k:\mathbb{R}^d\times
\mathbb{R}^d\rightarrow[0,\infty)$ have found widespread interest in the
literature \citep[e.g.,][]{Higdon.1998,Higdon.2002} for  modeling data through an isotropic stationary  convolution process
\begin{equation}\label{eq:genconv}
 X(s)=\int_{\mathbb{R}^d} k(s,y) L(\mathrm{d}y), 
\end{equation}
which is well defined provided that mild integrability conditions are satisfied \citep[e.g.,][]{BarndorffNielsen.al.2014}. If the variance $\sigma^2>0$ of $L'$ is well-defined, the covariance function of $X(s)$  is
\begin{equation}
C(s)=\sigma^2\int_{\mathbb{R}^d} k(y-s)k(s)\,\mathrm{d}y.
\end{equation}
If   $k(\cdot)$ is not an \emph{indicator kernel} (i.e., cannot be represented as an indicator function $c 1_A(\cdot)$ with some set $A$ and constant $c>0$), the  univariate distribution of  $X(s)$ is in general not part of the L\'evy basis, except for the case of a stable distribution $F'$. Tractability problems with likelihood-based approaches may arise; see \citet{Wolpert.Ickstadt.1998,Brix.1999,Kozubowski.al.2013}, who sidestep this problem using simulation-based inference.  Moreover, closed-form expressions of the covariance function $C(\cdot)$ are available only for specific choices of the kernel $k(\cdot)$. 
By contrast, indicator kernels yield marginal distributions of $X(s)$ that are still within the L\'evy
 basis. For instance, the property of an exponential family such as the gamma or the inverse Gaussian L\'evy bases  allows using generalized additive modeling techniques for integrating covariate information into marginal distributions. 

Non-Gaussian L\'evy bases offer a wide panoply of moment properties, including
skewness and heavy tails with undefined moments. 
Flexible univariate and multivariate marginal distributions may be created by adding process convolutions
pertaining to different parametric families.  For example, consider the sum of a Gaussian
basis and a gamma basis, yielding so-called exponentially shifted Gaussian marginal distributions. The special case where the gamma convolution process is degenerated to an exponential random variable common to the whole space has been studied in \citet{Krupskii.al.2015,Krupskii.Genton.2016}.
\subsection{Indicator convolution processes}
We now consider the indicator convolution process $X(s)=L(s+A)$ defined over $\mathbb{R}^d$ using  a L\'evy basis $L$ on $\mathbb{R}^d$ and the indicator set $A\in\mathcal{B}_b$. If $\mathrm{Var}(L')=\sigma^2<\infty$, the covariance function of $X(s)$  is $C(s)=\sigma^2\, |A\cap (s+A)|$, which is the set covariance function of $A$ rescaled by $\sigma^2$. When second moments are not defined, we may still use the set covariance function for quantifying dependence in terms of  the ``size" of the additive random component $L(A\cap (s+A))$ that is common to two variables separated by the spatial lag $s$. To achieve spatial isotropy of $X(s)$, we further assume that $A$ is a hyperball.  If $d=2$, then $A=A_\rho=\{s \in\mathbb{R}^2 \mid \| s\| \leq \rho\}$ is a disc of  radius $\rho\geq 0$. We include points $s$ with $\|s\|=\rho$ into $A_\rho$ to  get  closed indicator sets. The resulting isotropic stationary covariance function (with $u=\|s\|$) is 
\begin{equation}\label{eq:covdisc}
C(u; r)= 
\begin{cases}
2\rho^2\left( \cos^{-1}(u/(2\rho))-(u/2\rho)\left[1-h^2/(2\rho^2)\right]^{1/2}  \right), & 0\leq u \leq 2\rho, \\
0, & u>2\rho.
\end{cases}
\end{equation}
One may use the approximation $C(u; \rho)\approx \pi \rho^2(1-u/(2\rho))_+$ \citep{Davison.Gholamrezaee.2012}. The covariance \eqref{eq:covdisc} has bounded support and decreases to $0$ at approximately linear speed. To gain in flexibility, we could choose a random radius $\rho$, but such mixture modeling may be inappropriate in practice when only a single independent replication of the process $X(s)$ is observed, such that the mixture distribution cannot be identified. Flexible covariance functions with bounded or unbounded support may be obtained depending on the mixture distribution, but then closed-form expressions of $C(u)$ are generally not available. Instead, we propose to go beyond the restrictive dependence structure based on disc-shaped indicator sets in $\mathbb{R}^2$  by smoothing a L\'evy basis in $\mathbb{R}^3$, using so-called \emph{hypograph kernels}.

\section{Hypograph kernels}
\label{sec:hypograph}

\subsection{Hypographs}
To generate isotropic stationary models, we define the indicator set  $A_H$ as the area enclosed between the surface $s_{d+1}=H(s_1,\ldots,s_d)$ of a radially symmetric density function $H\geq 0$ (called \emph{height function}) and the plane $s_{d+1}=0$, corresponding to the (positive) \emph{hypograph} of $H$ denoted by  $A_H=\{(s,h)\in \mathbb{R}^{d}\times [0,\infty)\mid 0\leq H(s)\leq h\}$. 
 To avoid identifiability issues between the distribution $F'$ and the volume of the hypograph,  we assume
that $\int_{\mathbb{R}^d} H(s)\,\mathrm{d}s=1$ such that $H$
is the probability density function of a $d$-variate spherical distribution.
Moreover, by abuse of notation, we will throughout use the convention $s+A_H=\{(s,0)+y\mid y\in A\}$ for $s\in\mathbb{R}^d$.

Flexible parametric models for $X(s)$ become available, and mportant differences arise in the dependence structure of $X(s)$ in contrast to the classical
kernel convolution approach in \eqref{eq:genconv} where the value of the kernel function is used to
locally rescale a L\'evy basis on
$\mathbb{R}^d$ before its aggregation over space.  Intuition for spatial modeling  suggests that what happens at a site $s$  should be stronger influenced by what is happening close to the site
than by what is happening farther away. Therefore, we will suppose that height functions are radially non-increasing such that $H(s_2)\leq H(s_1)$ if $\|s_2\|\geq \|s_1\|$, which will also simplify calculating covariance functions, see the following Section~\ref{sec:covar}. 
Illustrations of hypograph convolution processes in $\mathbb{R}^1$ are given in Figure~\ref{fig:simcont1d} for the Gaussian and gamma bases,  and in Figure~\ref{fig:simpois1d} for the Poisson basis; they are based on the Laplace density $H$ resulting in  an exponential correlation function; see Section~\ref{sec:covar}.  Figure~\ref{fig:simgamma2d} shows realizations of two spatial gamma convolution processes based on the bivariate spherical Gaussian density $H$ resulting in a correlation function defined in terms of the univariate Gaussian survival function,  see Section~\ref{sec:covar}. The L\'evy bases have been simulated using a discretization on a fine regular grid over $\mathbb{R}\times[0,H((0,0)^T)]$; for a short exposition of alternative simulation techniques, see Section~\ref{sec:simulation}. 

\begin{figure}
	\centering
		\includegraphics[width=3.5cm]{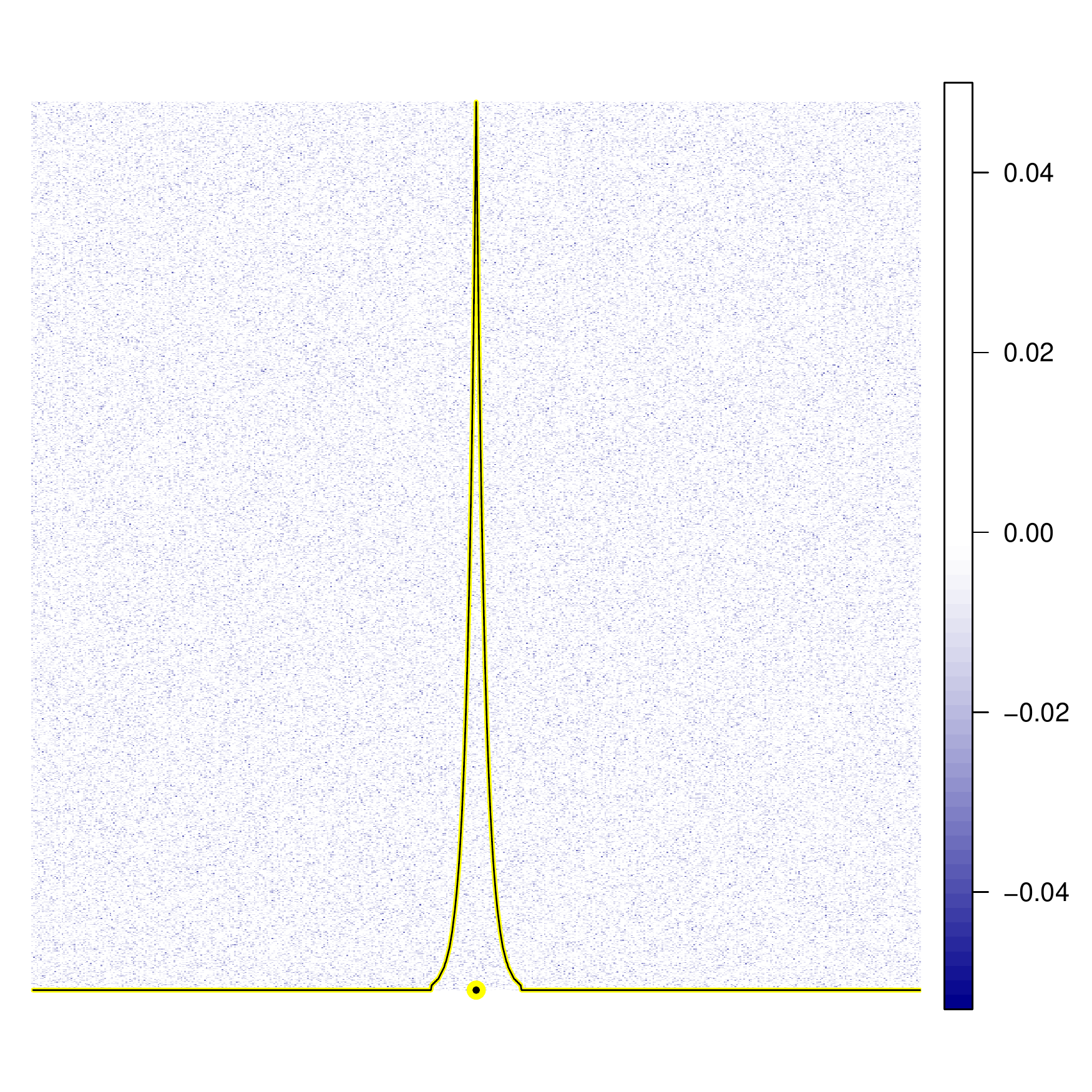}
	\includegraphics[width=3.5cm]{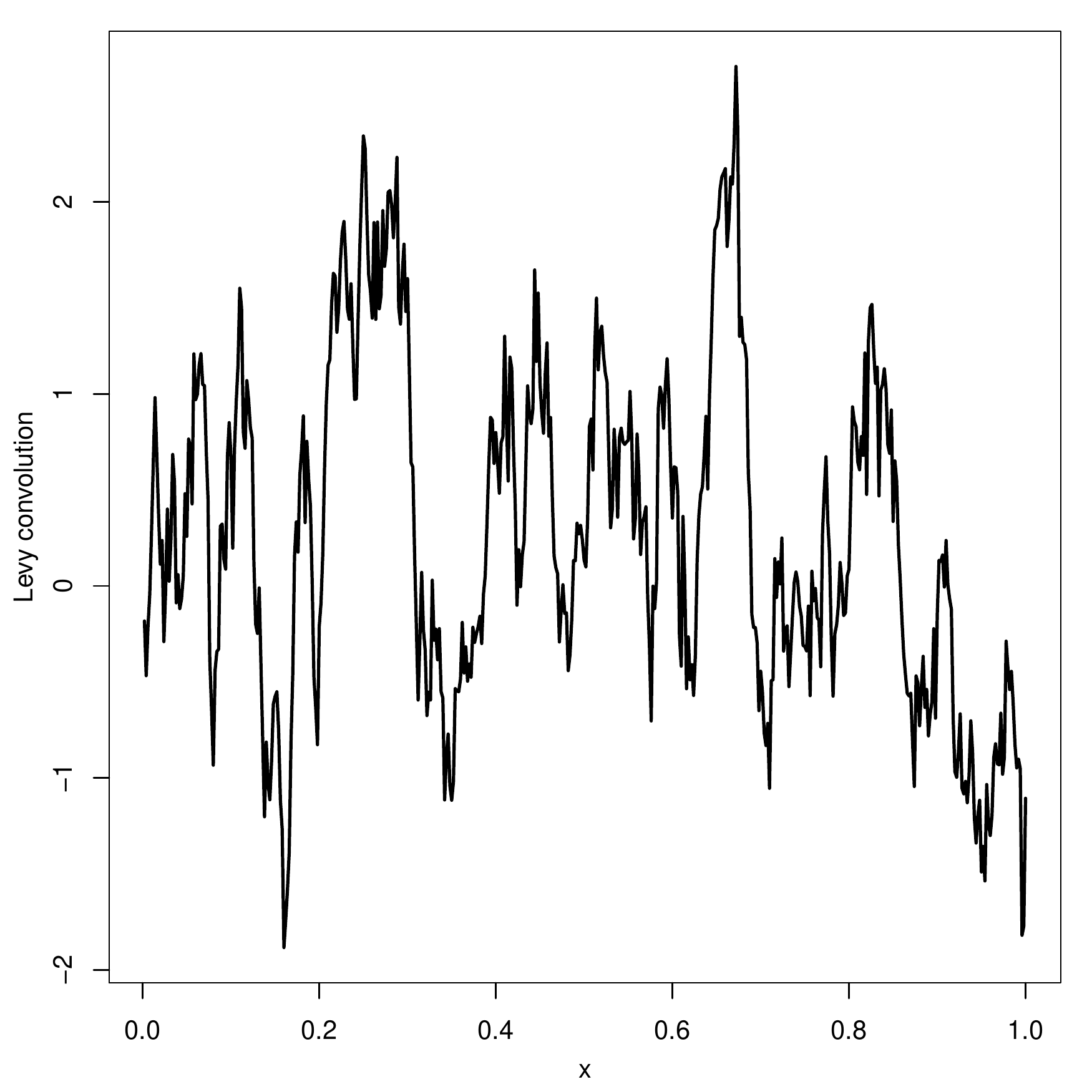}\hspace{.25cm}
	\includegraphics[width=3.5cm]{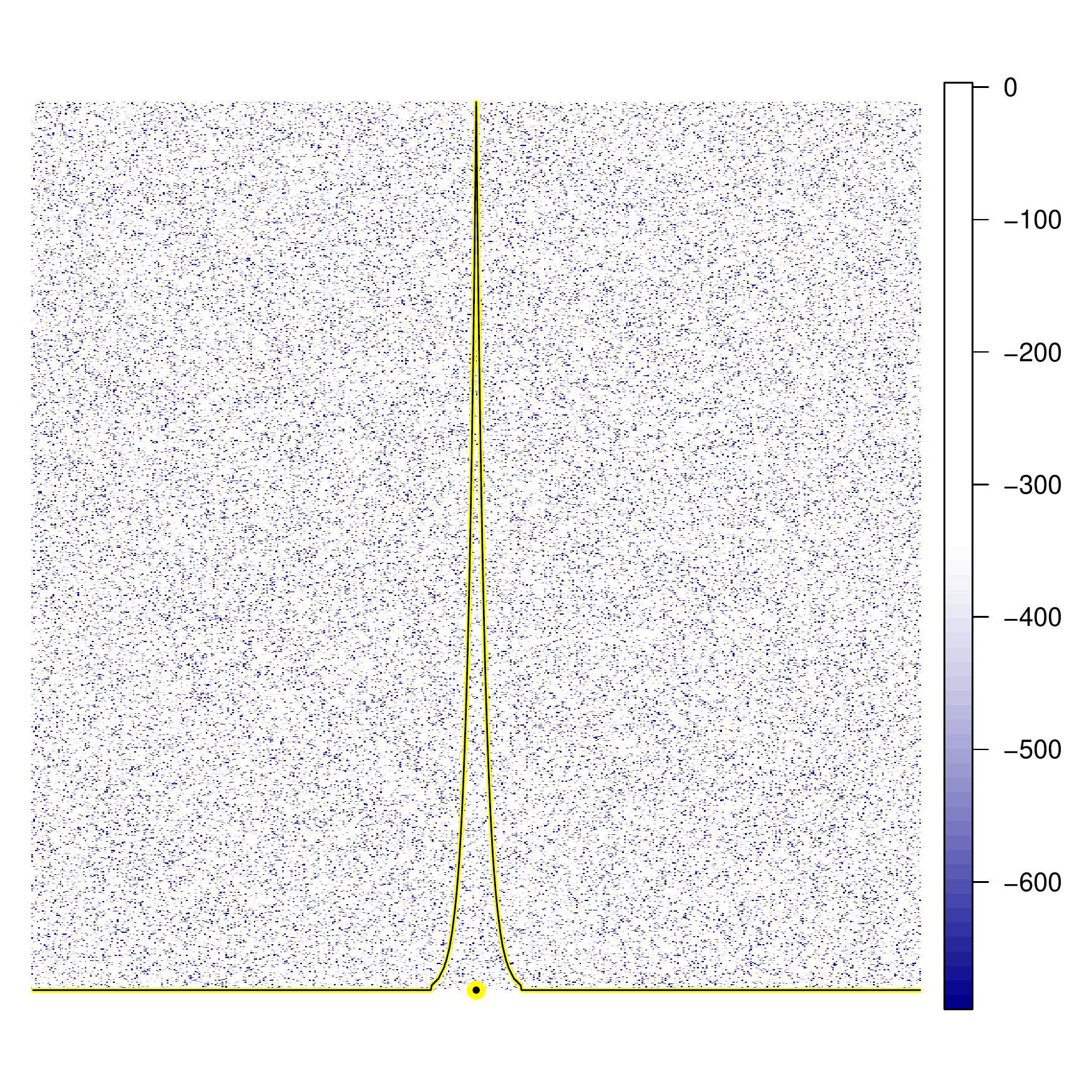}
	\includegraphics[width=3.5cm]{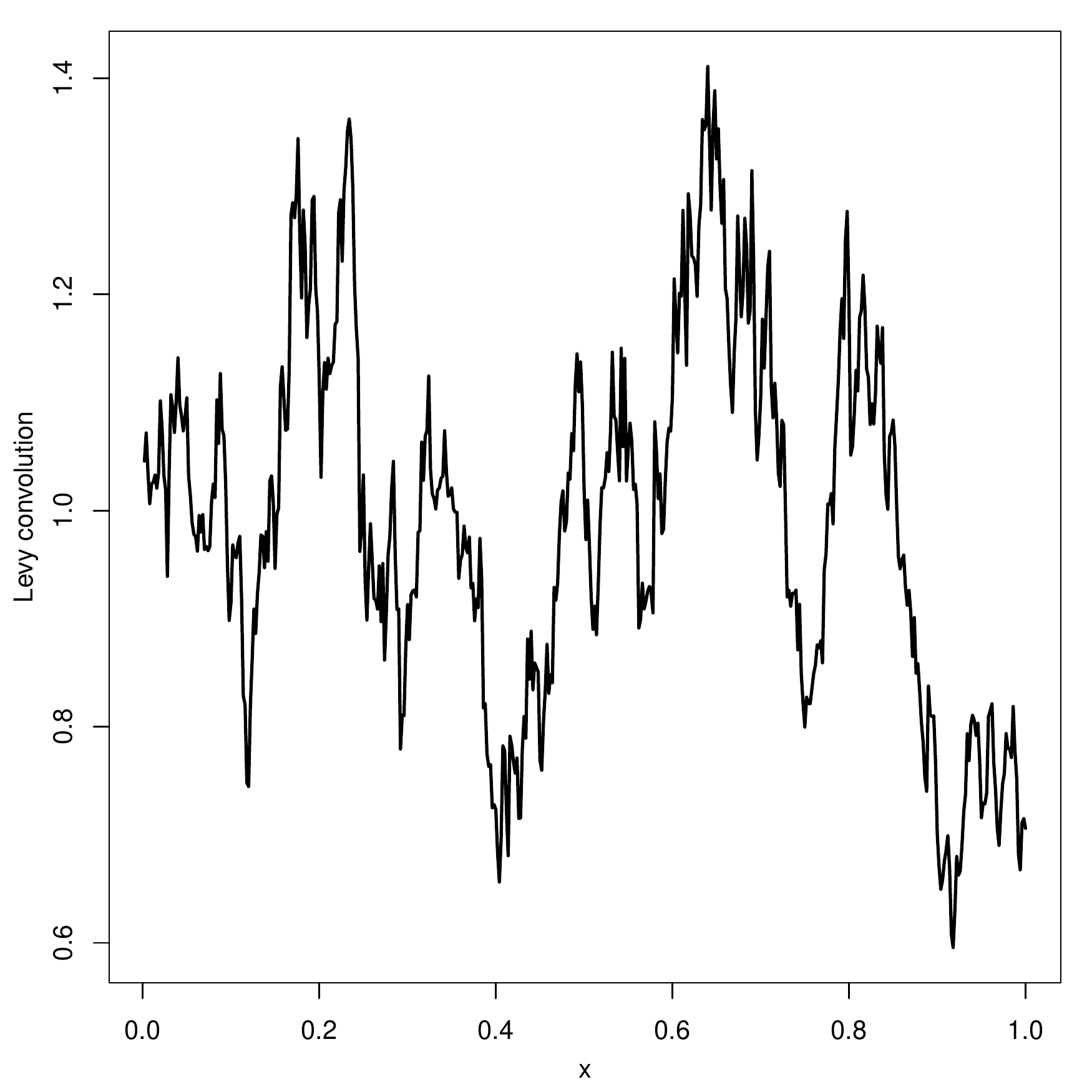}
	\caption{Hypograph convolution processes. Gaussian basis (1st display from the left) and gamma basis (3rd display, on log-scale) in $\mathbb{R}^1$ with height function given as the Laplace density.  Gaussian convolution process (2nd display), gamma convolution process (4th display).}
	\label{fig:simcont1d}
\end{figure}
\begin{figure}
	\centering
	\includegraphics[width=4.25cm]{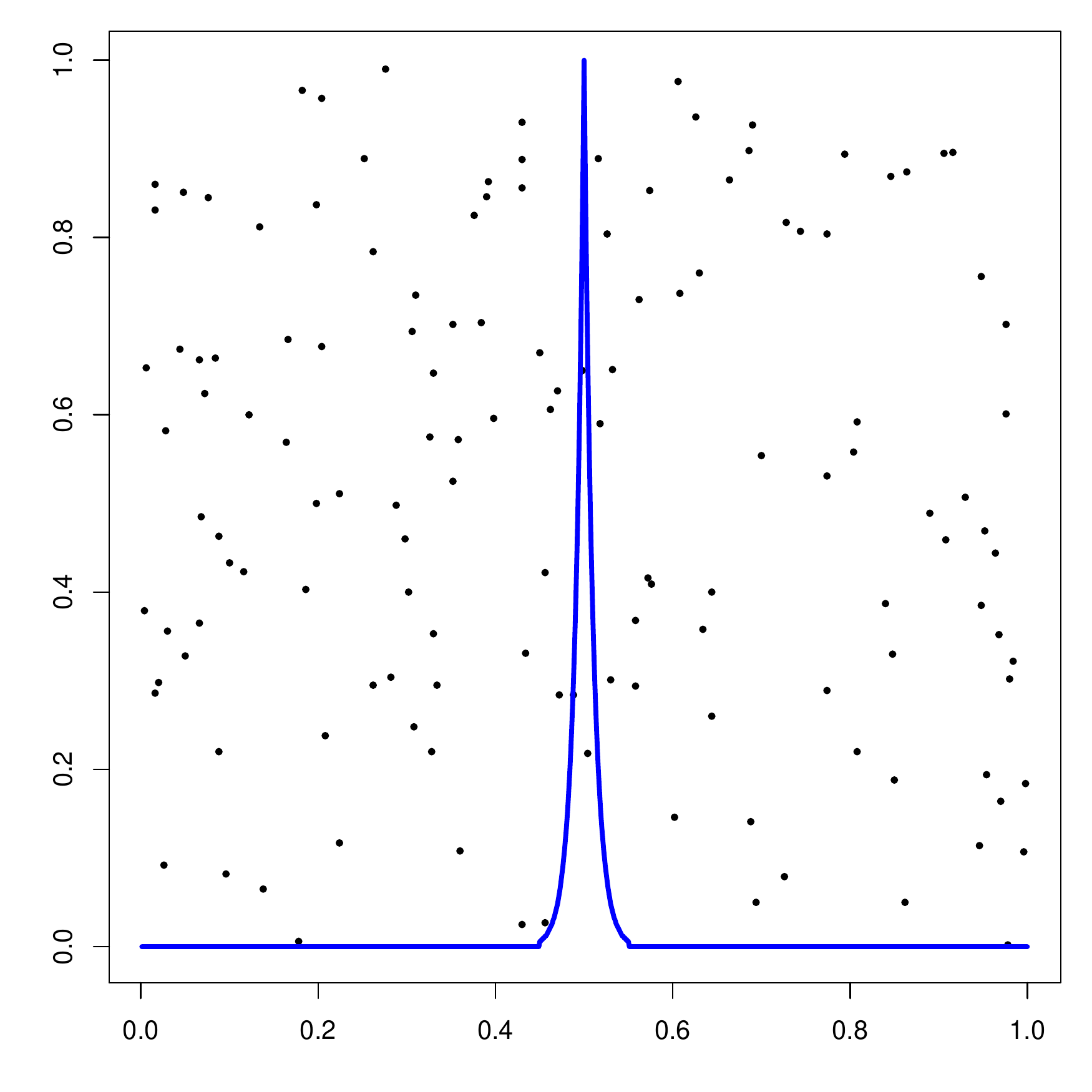}\hspace{.25cm}
	\includegraphics[width=4.25cm]{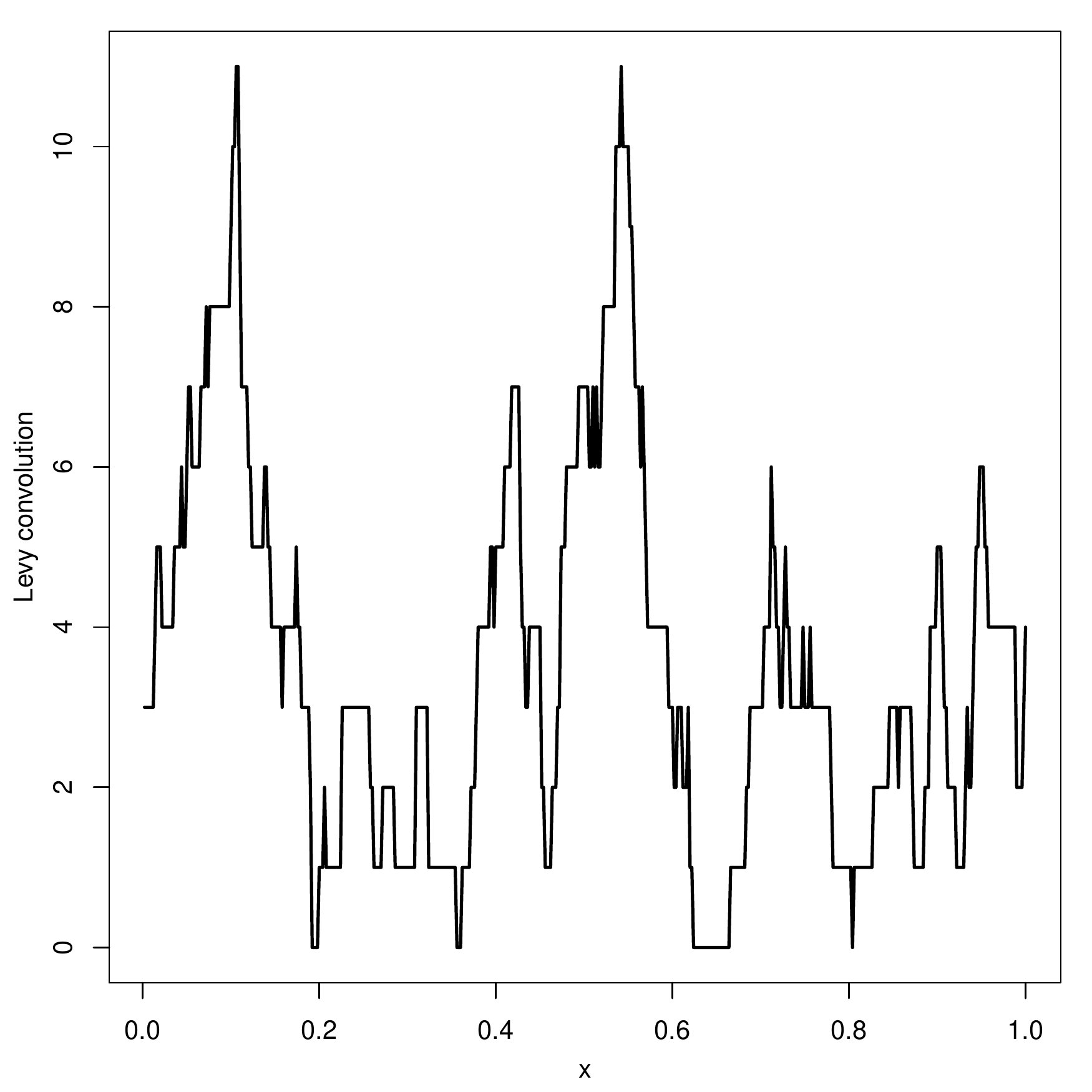}\hspace{.25cm}
	\includegraphics[width=4.25cm]{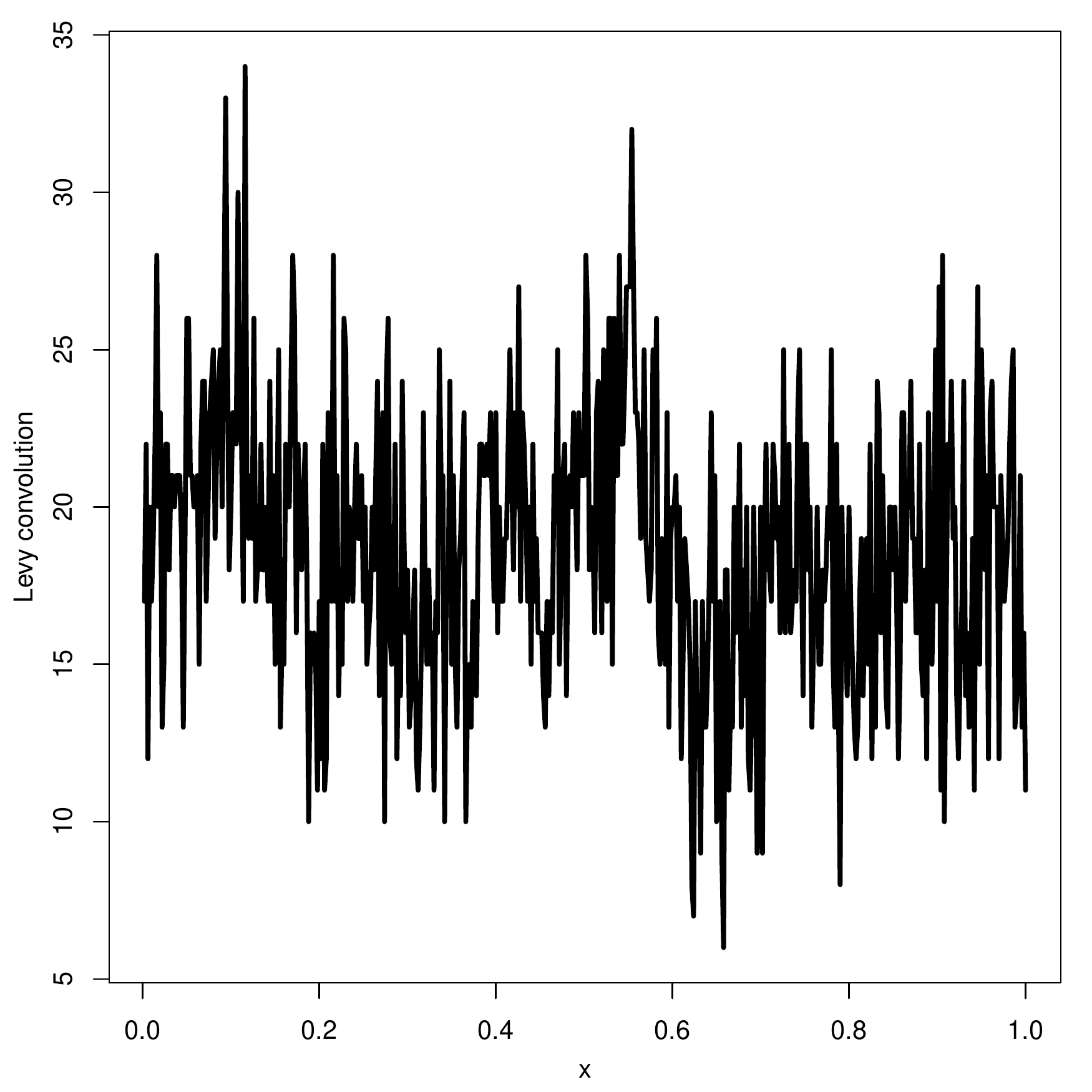}
	\caption{Hypograph convolution processes with Poisson basis (left) in $\mathbb{R}^1$ with height function given as the Laplace density. Middle: convolution process. Right: convolution process with additive Poisson nugget effect.}
	\label{fig:simpois1d}
\end{figure}

\begin{figure}
	\centering
	\includegraphics[width=6cm]{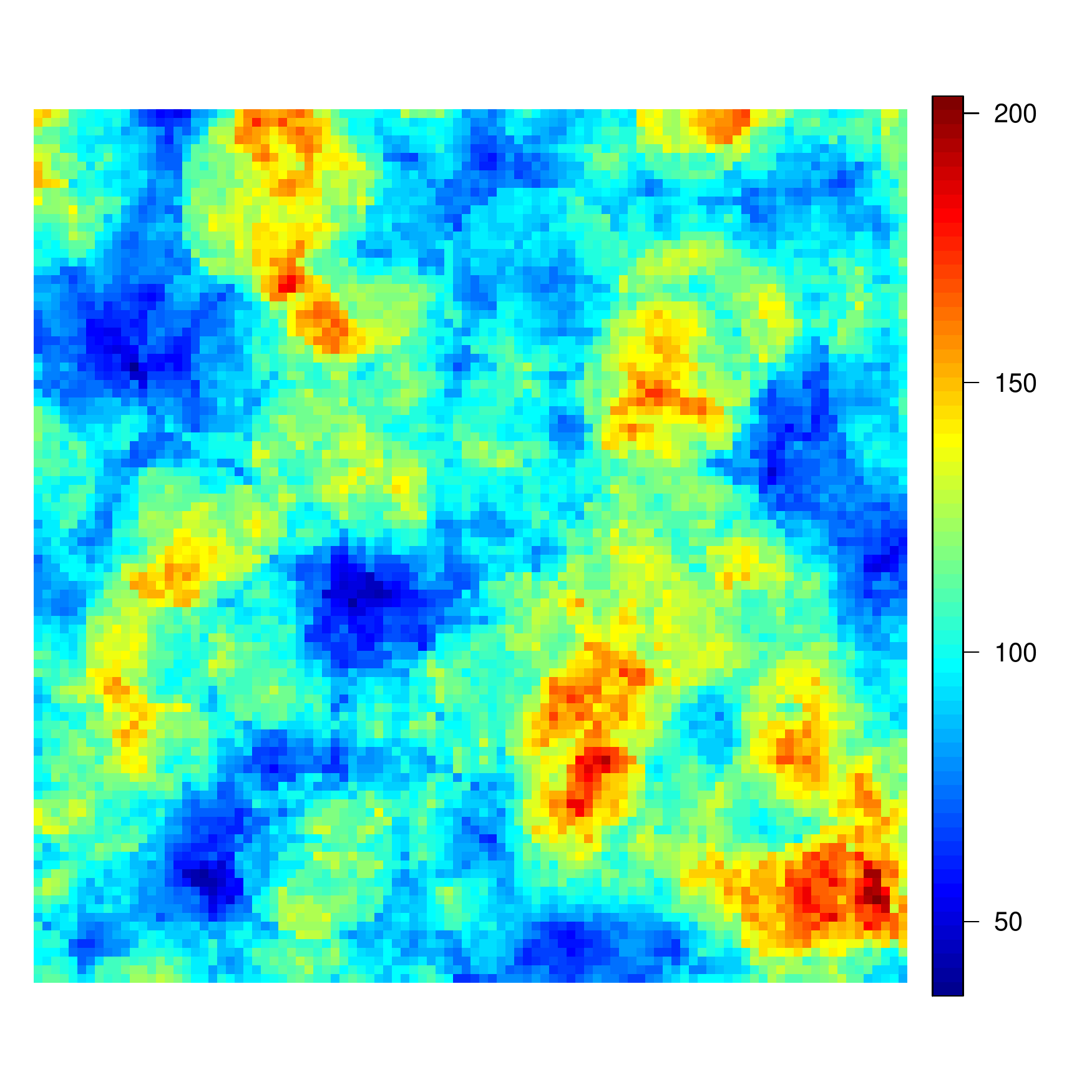}\hspace{1cm}
	\includegraphics[width=6cm]{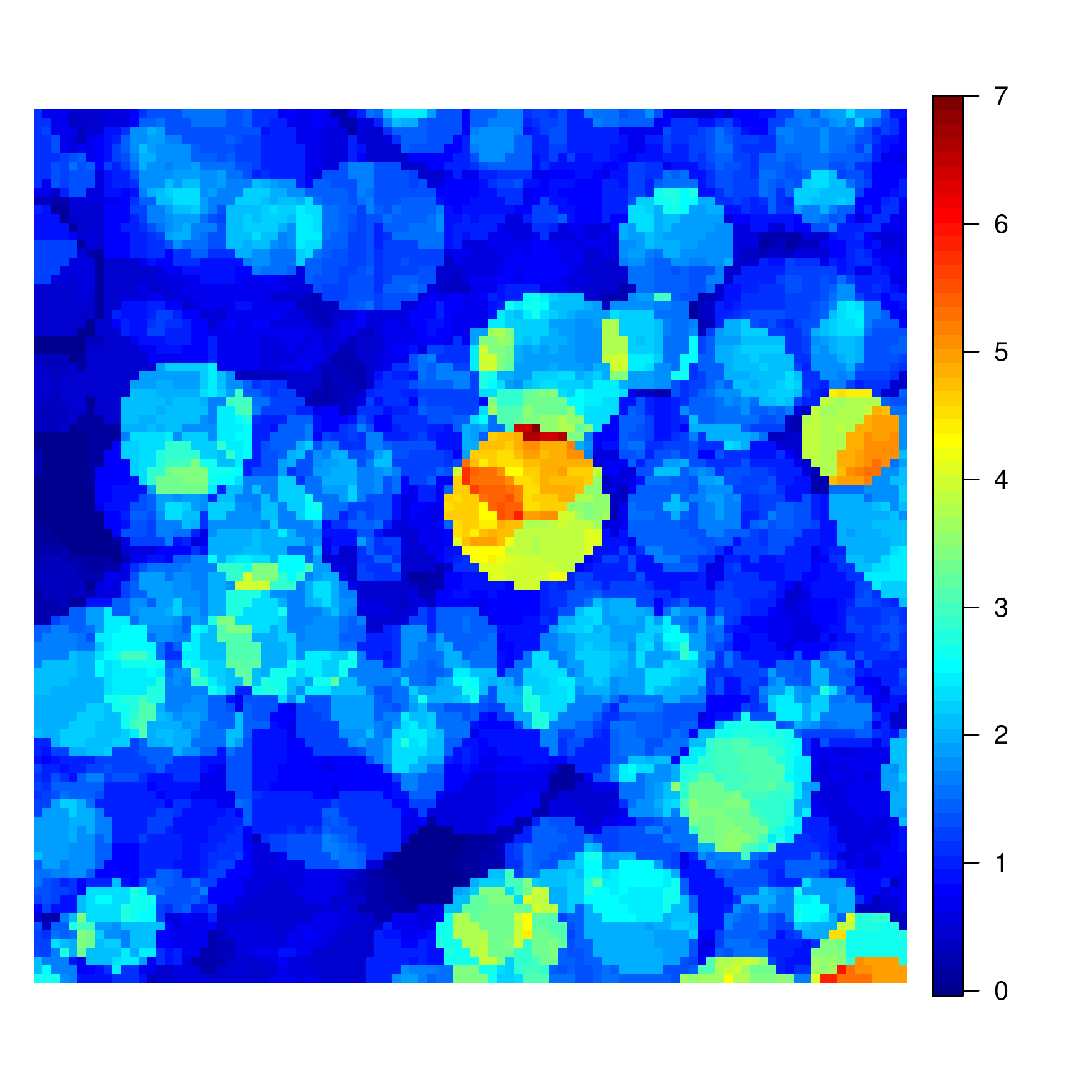}
	\caption{Hypograph convolution processes on $[0,1]^2$ with gamma L\'evy basis and height function given as the Gaussian density with standard deviation $0.05$.  Left: gamma convolution process with $\Gamma(20,1/5)$ margins. Right: gamma convolution process with $\Gamma(2,1)$ margins.}
	\label{fig:simgamma2d}
\end{figure}

\subsection{Hypograph covariance functions}
\label{sec:covar}
We now suppose that $L$ is an isotropic stationary L\'evy basis with finite second moment $\sigma^2=\mathrm{Var}(L')$, and we consider the general case of a Lebesgue-integrable height function $H_s\geq 0$ that may depend on location $s$, such that the convolution process $X(s)=L(s+A_{H_s})$ may be nonstationary. The intersection of the hypographs centered at points $s_1$ and $s_2$ is
$\{(s,h) \mid 0\leq h \leq \min(H_{s_1}(s-s_1),H_{s_2}(s-s_2))\}$. We obtain the covariance function 
\begin{equation}\label{eq:covfun}
C(s_1,s_2)=\sigma^2\int \min(H_{s_1}(y-s_1),H_{s_2}(y-s_2))\,\mathrm{d}y, \qquad s_1,s_2\in\mathbb{R}^d.
\end{equation}
Interestingly, when $H_s$ are probability densities and $\sigma^2=1$, the class of correlation functions $C$ in  \eqref{eq:covfun} 
coincides with the tail correlation functions of max-stable
processes \citep[Section 2]{Strokorb.al.2015}, which have been 
characterized in detail by \citet{Fiebig.al.2017,Strokorb.al.2015}.
In the stationary case with radially symmetric height function $H$, the covariance function is 
\begin{equation}\label{eq:covfunstat}
C(s)=\sigma^2\int \min(H(y-s),H(y))\,\mathrm{d}y, \qquad s\in\mathbb{R}^d.
\end{equation}
To obtain easily interpretable and flexible covariance
functions $C$ with closed-form expression  of the integral in \eqref{eq:covfunstat}, we  suppose that the height function $H$ is radially non-increasing. Then,  $C$ can be expressed through the univariate survival function $\overline{G}$ of the spherical
distribution characterized by $H$.  For points $s$ that are closer to $s_1$ than to $s_2$, we have
$\min(H(s-s_1),H(s-s_2))=H(s-s_2)$, and by symmetry we get
$\min(H(s-s_1),H(s-s_2))=H(s-s_1)$ when $s$ is closer to $s_2$. Writing
$u=\|s_2-s_1\|$,  the part of the intersection that is closer to $s_1$ than to $s_2$ has
hypervolume $1-G(u/2)=\overline{G}(u/2)$, and by symmetry we get the same
intersection volume for points closer to $s_2$ than to $s_1$. 
Therefore, the correlation function is
\begin{equation}
C(u)=2(1-G(u/2))=2\overline{G}(u/2),
\end{equation}
where $G(0)=0.5$  owing to radial symmetry. Next, we present  examples of interesting parametric models, ranging from  a nugget effect and bounded dependence range to the exponential correlation function and long-tailed correlation functions. 
\begin{example}[Parametric correlation functions]\label{ex:cov} We use the notation $u=\|s\|$, and we fix $d=2$ for simplicity. 
\begin{enumerate}
\item Cylinder hypograph. The cylinder-shaped height function with radius parameter $\rho>0$, given as $H(s)=1(u\leq \rho)/(\pi\rho^{2})$, yields the correlation function presented in \eqref{eq:covdisc}.
\item Half-ball hypograph. We set  $H(s)=\sqrt{\rho^2-(s_1^2+s_2^2)_+}$, such that the hypograph corresponds to a half-ball of radius $\rho>0$. Then, 
\begin{equation}
C(u;\rho)=\pi(4\rho+u)(2\rho-u)^2 / 12 \text{ for } u < 2\rho, \quad C(u)=0
\text{ for } h\geq 2\rho.
\end{equation}
\item Gaussian hypograph. Setting $H(s)=(2\pi)^{-1}\rho^{-2}\exp(-u^2/(2\rho^2))$ with $\rho>0$ yields $C(u;\rho)=2\overline{\Phi}(u/(2\rho))$ where $\Phi$ is the univariate standard Gaussian distribution function. 
\item  $t$ hypograph. Using the density of the spherical $t$ distribution with dispersion parameter $\rho>0$ and $\nu>0$ degrees of freedom yields  $C(u;\rho)=2\overline{t}_\nu(u/(2\rho))$ where $t_{\nu}$ is the distribution function of the
univariate standard $t$ distribution. 
\item Laplace hypograph.  Using the spherical Laplace distribution  with dispersion parameter $\rho>0$  \citep{Kotz.al.2001}, we get  $\overline{G}(u)=0.5\exp(-u)$ and $C(u;\rho)=\exp(-u/\rho)$. The height function $H$ has a
singularity at the origin since $H(s)\rightarrow \infty$ when
$u\rightarrow 0$. 
\item Slash hypograph. Using the spherical slash distribution with scale parameter $\rho>0$ \citep[e.g.,][]{Wang.al.2006}
yields 
 \begin{align*}
C(u;\rho)&=2(1-[\Phi(u/(2\rho))-\{\varphi(0)-\varphi(u/(2\rho))\}/(u/(2\rho))])\\&=2\Phi(-u/(2\rho))-4(\sqrt{2\pi}-\varphi(u/(2\rho)))/(u/\rho),
\quad u>0,
\end{align*}
where the limit in $0$ is $C(0)=1$. 
\item Nugget effect. The limit when $\rho\downarrow 0$ in the above examples can be interpreted as a nugget effect with the Dirac correlation function $C(u)=\delta_0(u)$.
\item Convex sums. A convex sum $\omega H_1+(1-\omega)H_2$, $\omega\in[0,1]$, of two height functions leads to the correlation function $C(u)=\omega C_1(u)+(1-\omega)C_2(u)$, with $C_1$ and $C_2$ the correlation functions associated to $H_1$ and $H_2$ respectively. 
\end{enumerate}
\end{example}


\subsection{Simulation of hypograph convolution processes}
\label{sec:simulation}
Exact simulation of the L\'evy basis $L$ and hypograph convolution processes is well understood and straightforward for the Gaussian L\'evy basis, where direct simulation of the Gaussian process $X(s)$ according to its covariance function can be done.  The Poisson L\'evy basis corresponds to a stationary spatial  Poisson process, for which exact simulation is also feasible.  For other models such as the gamma or the inverse Gaussian L\'evy bases, the additive decomposition into the Gaussian and the pure jump part based on the  characteristic function  (see Section~{sec:levybasis})  can be used for exact or approximate simulation with arbitrarily small approximation error; see \citet{Wolpert.Ickstadt.1998} for this \emph{Inverse L\'evy measure} approach. An application to simulating spatial gamma L\'evy basis has been implemented in \citet{Wolpert.Ickstadt.1998b}.

In general, an approximate realization of $(X(s_1),\ldots,X(s_m))^T$ for points $s_j$, $j=1,\ldots,m$,  can be simulated by using a discretization of the L\'evy basis on a fine grid spanning over a domain $D\times [0,h_{\max}]$ with $h_{\max}=\max_s H(s)$ (if it is finite) and $D$ including the points $s_j$. A good approximation requires that the contribution of $L((s+A_H)\cap (D\times [0,h_{\max}])^C)$ to $X(s)=L(s+A_H)$ is  negligible in practice. If densities and distribution functions are not available in closed form while the  characteristic function is, the direct numerical inversion of the characteristic function makes it possible to calculate densities and distribution functions, which then allows simulation over space discretized to a fine grid. Related tools are implemented  in \texttt{R} packages such as \texttt{CharFun} \citep{Simkova.2017} or \texttt{prob} \citep{Kerns.2017}.

Another approximation technique for smoothing a L\'evy basis in
$\mathbb{R}^{d+1}$  according to  a hypograph
$A_H$  may be to apply Cavalieri's
principle (i.e., Fubini's theorem) to reduce the problem as follows: first, calculate a finite number $m$ of independent L\'evy convolution processes $X_i(s)$ in $\mathbb{R}^d$ with disc-shaped kernels, then add them up to obtain an approximate simulation of $X(s)$. We now give some more details for $d=2$. The simulation of a $3$-dimensional L\'evy basis $L$ with L\'evy seed $L'$  and the calculation of $X(s_j)=L(s_j+A)$ are  reduced to the computationally simpler problem of    simulating $m$  L\'evy bases $L_i$ in $\mathbb{R}^2$ and smoothing them with  disc-shaped indicator kernels.
The key idea is that smoothing a L\'evy basis in $\mathbb{R}^{3}$ with a cylinder-shaped hypograph of radius $\rho$ yields the same distribution of $X(s)$ as smoothing a L\'evy basis in $\mathbb{R}^{2}$ with a disc-shaped indicator set of radius $\rho$. The L\'evy basis in $\mathbb{R}^2$ is obtained by projecting the mass of the $\mathbb{R}^{3}$ basis for $s_{3}\in[0,1/(\rho^2\pi)]$ onto $\mathbb{R}^2$. For an arbitrary continuous and radially non-increasing height function $H$, we approximate the radial function $g$ in $H((r,w))=g(r)/(2\pi)$  by a step function with $m\geq 1$ steps at radius values $r_0=0<r_1<\ldots<r_m<r_{m+1}=\infty$, with steps  of size $\rho_i=g(r_i)-g(r_{i+1})$, $i=1,\ldots,m$, and using the convention $g(r_{m+1})=0$.  We approximate $X(s)$ through the sum of $m$ hypograph convolution processes $X_i(s)$ with cylinder-shaped height functions of radius $\rho_i$, and the L\'evy seed $L'_i$ of $L_i$ has characteristic function  $\varphi(t; L'_i)=\varphi(t;L')^{1/(\rho_i^2/\pi)}$. 

\section{Extremal dependence behavior}
\label{sec:tail}
A major benefit of non-Gaussian process convolutions is increased tail flexibility.  We here show  important results with respect to bivariate extremal dependence summaries. For any two functions $a$ and $b$ with $b(x)\not=0$, we use the notation $a(x)\sim b(x)$ to indicate that $a(x)/b(x)\rightarrow 1$ when  $x\rightarrow\infty$. 
We define the \emph{tail correlation} $\chi$ of two random variables $X_1\sim F_1$ and $X_2\sim F_2$ as the conditional limit 
\begin{equation}
\chi=\lim_{x\rightarrow\infty} \mathrm{pr}\left(F_1(X_1)>1-1/x\mid F_2(X_2)>1-1/x\right) \in [0,1]
\end{equation}
if it exists, where $\chi>0$ indicates asymptotic dependence while $\chi=0$ corresponds to asymptotic independence \citep{Coles.al.1999}. In the case of asymptotic independence, joint tail decay rates are faster than marginal tail decay rates. Then,  more precise information for a wide range of bivariate distributions can be obtained through the Ledford--Tawn representation \citep{Ledford.Tawn.1996}
\begin{equation}\label{eq:ledfordtawn}
\mathrm{pr}(F_1(X_1)>1-1/x,F_2(X_2)>1-1/x) \sim \ell(x) x^{-1/\eta}, \qquad x\rightarrow \infty, 
\end{equation}
with the \emph{coefficient of tail dependence} $\eta\in(0,1]$ and a slowly varying function $\ell(\cdot)$, i.e., $\ell(tx)/\ell(t)\rightarrow 1$ when $t\rightarrow\infty$. An alternative yet equivalent parametrization is through $\overline{\chi}=2\eta-1\in(-1,1]$ \citep{Coles.al.1999}, and we have
\begin{equation}\label{eq:chibar}
\overline{\chi}=\lim_{x\rightarrow\infty} \frac{\log \mathrm{pr}\left(F_1(X_1)>1-1/x\right)}{\log \mathrm{pr}\left(F_1(X_1)>1-1/x,F_2(X_2)>1-1/x\right)}.
\end{equation}
Incidentally, the value of $\overline{\chi}$ is the linear correlation coefficient in the case of a bivariate Gaussian distribution. 
In our set-up of   isotropic stationary processes, we can define the summaries $\chi$ and $\overline{\chi}$ with respect to the distance  between two sites $s_1,s_2$. 

We first collect some notations and definitions to provide useful results on tail dependence in stationary indicator convolution processes $X(s)$ based on a L\'evy basis $L$. 
 To simplify notations, we write $X_1=L(A)$, $X_2=L(s+A)$ with $X_1\stackrel{d}{=}X_2$ the variables defined at
sites $s_1=0$ and $s_2=s$ based on their respective indicator sets
$A$ and $s+A$. Further, let $X_{12}=L(A\cap (s+A))$,
$X_{1\setminus 2}=L(A\setminus (s+A))$ and $X_{2\setminus
	1}=L((s+A)\setminus A)$. Clearly, these three variables $X_{12}$, 
$X_{1\setminus 2}$, $X_{2\setminus 1}$ are stochastically
independent with 
\begin{equation}\label{eq:bvrep}
X_1=X_{12}+X_{1\setminus 2}, \qquad
X_2=X_{12}+X_{2\setminus 1},\qquad X_{1\setminus 2}\stackrel{d}{=}X_{2\setminus 1},
\end{equation} 
such that $X_{12}$ represents a ``factor" that is common to $X_1$ and $X_2$, while the residuals $X_{1\setminus 2}$ and $X_{2\setminus 1}$ of
$X_1$ and $X_2$ respectively with respect to this factor are independent. For simplicity's sake, we use the following notation for  the hypervolumes of the indicator sets: 
\begin{equation}\label{eq:alpharep}
\alpha=|A|, \qquad \alpha_{0}=|A\cap (s+A)|, \qquad \alpha_{\mathrm{res}}=|A\setminus (s+A)|,
\end{equation}
where $\alpha_{\mathrm{res}}=|(s+A)\setminus A|$ by symmetry. We  further write $F_{\alpha}$ for the distribution of a L\'evy basis variable $L(A)$.  Moreover,  we denote by $\overline{F}(x)=1-F(x)$ the survival function of a distribution $F$, and  by $F\star F$ the distribution of its convolution with itself. We now recall important tail distribution classes, which have been established in the literature and encompass many practically relevant infinitely divisible distributions. Depending on the class to which the L\'evy basis pertains,  we will show that structually very different tail behavior arises in the convolution process. 

\emph{Subexponential distributions} are an important class of heavy-tailed distributions \citep{Foss.al.2011}.  A distribution $F$ is  called subexponential if  $\overline{F\star F}(x)/\overline{F}(x)\sim 2$, where  we additionally require that $F\star F$ is long-tailed when negative values arise with positive probability  \citep{Foss.Korshunov.2007,Watanabe.2008}. 
Infinitely divisible subexponential distributions include the  Weibull (with Weibull index smaller than $1$), lognormal, Fr\'echet or Pareto ones and, more generally, all regularly varying distributions characterized by the limit relation 
\begin{equation}
\overline{F}(tx)/\overline{F}(t)\rightarrow x^{-\gamma}, \quad t\rightarrow\infty, \quad x>0, 
\end{equation}
with the regular variation index $\gamma>0$.  If $L(A)$ is  subexponential for a set $A$, then all variables in the L\'evy basis are subexponential,  and we simply  say that the L\'evy basis is subexponential.  In this case, we get $\overline{F}_{\alpha_1}(x)/\overline{F}_{\alpha_2}(x) \sim \alpha_1/\alpha_2$ for $\alpha_1,\alpha_2>0$, see \citet{Foss.al.2011}.  

An important class of light-tailed distribution are those with exponential tails.  We say that  $F$ has \emph{exponential tail with rate $\beta>0$} if 
\begin{equation}\overline{F}(t+x)/\overline{F}(t)\rightarrow \exp(-\beta x), \qquad t\rightarrow\infty,
\end{equation}
for all $x$ \citep[e.g.,][]{Pakes.2004,Watanabe.2008}. An exponential tail in $X\sim F$ is equivalent to a regularly varying distribution of  $\exp(X)$ with index $\beta$. Two disjoint subclasses are the light-tailed convolution-equivalent distributions satisfying $\mathbb{E}\exp(\beta X)<\infty$, and the gamma-tailed distributions satisfying $\mathbb{E}\exp(\beta X)=\infty$. 
We call $F$ \emph{convolution-equivalent with rate $\beta>0$} if it is exponential-tailed with rate $\beta$ and 
\begin{equation}\label{eq:conveq}
\overline{F\star F}(x)/\overline{F}(x)\rightarrow 2 m_F(\beta)<\infty, \qquad m_F(\beta)=\int_{-\infty}^{\infty} \exp(\beta x) F(\mathrm{d}x), 
\end{equation}
where  $m_F(\beta)$ is the \emph{$\beta$-exponentiated moment}. Any positive limit arising in \eqref{eq:conveq} is necessarily $2m_F(\beta)$.  All variables $L(A)$ in a L\'evy basis with convolution-equivalent L\'evy seed $L'$ are convolution-quivalent with the same rate $\beta$ \citep[Theorem 3.1,][]{Pakes.2004}, and we say that such a L\'evy basis is convolution-equivalent.
For instance, the inverse Gaussian distribution characterized in \eqref{eq:cfinvgauss} is convolution-equivalent with $\beta=\lambda/(2\mu^2)$. 
In the exponential-tailed case where the $\beta$-exponentiated moment is not finite, we say that $F$ is \emph{gamma-tailed  with rate $\beta$}  if 
\begin{equation}\label{eq:exptype1}
\overline{F}(x) \sim \ell(x) x^{\alpha-1}\exp(-\beta x), \qquad \alpha,\beta>0,
\end{equation}
with some slowly varying function $\ell$.
For instance, the variables $L(A)$ in a gamma L\'evy basis are gamma-tailed with common rate parameter $\beta>0$ and $\ell(x)=\Gamma(\alpha'\alpha)^{-1}$ if the L\'evy seed satisfies $L'\sim \Gamma(\alpha',\beta)$, and we say that such a L\'evy basis is gamma-tailed. Distributions with tail behavior \eqref{eq:exptype1}, but where $\alpha<0$ is negative, are convolution-equivalent  \citep[Lemma 2.3,][]{Pakes.2004}.
The following Proposition~\ref{prop:se} treats the heavy-tailed set-up of subexponential L\'evy bases, for which asymptotic dependence arises in the convolution process $X(s)$.

 
  \begin{proposition}[Asymptotic dependence in subexponential L\'evy indicator convolutions]\label{prop:se}
Suppose that $L$ is a subexponential L\'evy basis.
With  notations as in \eqref{eq:bvrep} and \eqref{eq:alpharep}, the tail dependence coefficient of  the variables $X_1$ and $X_2$ for $A\not=\emptyset$ is 
\begin{equation}
\chi(X_1,X_2)= \frac{\alpha_{0}}{\alpha}.
\end{equation}
 \end{proposition}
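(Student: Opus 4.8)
The plan is to reduce the copula-scale definition of $\chi$ to an ordinary real-threshold tail ratio and then to show that the joint exceedance is governed entirely by the shared component $X_{12}$. Since $X_1\stackrel{d}{=}X_2\sim F_\alpha$, putting $u=u_x=F_\alpha^{-1}(1-1/x)$ gives $u_x\to\infty$ and $\overline{F}_\alpha(u_x)=1/x$, so the conditioning event has probability $\overline{F}_\alpha(u)$ and
\[
\chi=\lim_{u\to\infty}\frac{\mathrm{pr}(X_1>u,\,X_2>u)}{\overline{F}_\alpha(u)}.
\]
I then exploit the independent decomposition \eqref{eq:bvrep}. Writing $\overline{R}=\overline{F}_{\alpha_{\mathrm{res}}}$ for the common survival function of the residuals $X_{1\setminus2}\stackrel{d}{=}X_{2\setminus1}$ and conditioning on the shared factor $X_{12}\sim F_{\alpha_0}$, independence of the three components gives
\[
N(u):=\mathrm{pr}(X_1>u,\,X_2>u)=\int \overline{R}(u-t)^2\,\mathrm{d}F_{\alpha_0}(t),
\]
so the whole matter reduces to proving $N(u)\sim\overline{F}_{\alpha_0}(u)$.

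For the lower bound, fix $M>0$ and restrict to $\{X_{12}>u+M,\;X_{1\setminus2}>-M,\;X_{2\setminus1}>-M\}$, on which both $X_1,X_2>u$; independence yields $N(u)\ge\overline{F}_{\alpha_0}(u+M)\,\overline{R}(-M)^2$. Since subexponentiality implies long-tailedness, $\overline{F}_{\alpha_0}(u+M)\sim\overline{F}_{\alpha_0}(u)$, and letting $M\to\infty$ (so $\overline{R}(-M)\to1$) gives $\liminf_u N(u)/\overline{F}_{\alpha_0}(u)\ge1$. For the upper bound, fix $K$ and split the integral at $t=u-K$. On $t>u-K$ bound $\overline{R}(u-t)^2\le1$, so that piece is at most $\overline{F}_{\alpha_0}(u-K)\sim\overline{F}_{\alpha_0}(u)$. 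On $t\le u-K$ one has $u-t\ge K$, hence $\overline{R}(u-t)\le\overline{R}(K)$ by monotonicity, whence $\overline{R}(u-t)^2\le\overline{R}(K)\,\overline{R}(u-t)$ and this piece is bounded by $\overline{R}(K)\int\overline{R}(u-t)\,\mathrm{d}F_{\alpha_0}(t)=\overline{R}(K)\,\overline{F}_\alpha(u)$, recognizing the integral as the marginal tail of $X_1=X_{12}+X_{1\setminus2}$. By the subexponential ratio recalled above, $\overline{F}_\alpha(u)\sim(\alpha/\alpha_0)\overline{F}_{\alpha_0}(u)$, so $\limsup_u N(u)/\overline{F}_{\alpha_0}(u)\le1+\overline{R}(K)\,\alpha/\alpha_0$ for every $K$; letting $K\to\infty$ forces $\limsup\le1$. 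Hence $N(u)\sim\overline{F}_{\alpha_0}(u)$ and
\[
\chi=\lim_{u\to\infty}\frac{\overline{F}_{\alpha_0}(u)}{\overline{F}_\alpha(u)}=\frac{\alpha_0}{\alpha}
\]
by the same ratio property.

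The hard part is the upper bound. Because the residuals $X_{1\setminus2},X_{2\setminus1}$ are themselves subexponential with tails of the \emph{same} order as $\overline{F}_{\alpha_0}$, they cannot be discarded as lighter terms; the point is instead that making both $X_1$ and $X_2$ large through the residuals would require two independent large deviations, and this is what the inequality $\overline{R}(u-t)^2\le\overline{R}(K)\,\overline{R}(u-t)$ encodes, with the free factor $\overline{R}(K)\to0$ killing the off-diagonal contribution while $\int\overline{R}(u-t)\,\mathrm{d}F_{\alpha_0}(t)$ collapses to a single marginal tail. Minor additional care is needed at the reduction step if $F_\alpha$ carries atoms (use the generalized inverse) and in the degenerate case $\alpha_0=0$, where $X_{12}\equiv0$ and one checks directly that $N(u)=\overline{R}(u)^2=o(\overline{F}_\alpha(u))$, consistent with $\chi=0=\alpha_0/\alpha$.
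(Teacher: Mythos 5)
Your proof is correct, and it shares the paper's starting point — the independent three-component decomposition \eqref{eq:bvrep}, the identity $\mathrm{pr}(X_1>u,X_2>u)=\mathrm{pr}\bigl(X_{12}+\min(X_{1\setminus 2},X_{2\setminus 1})>u\bigr)$, and the subexponential ratio property $\overline{F}_{\alpha_1}(x)/\overline{F}_{\alpha_2}(x)\sim\alpha_1/\alpha_2$ — but it diverges at the key analytic step. The paper observes that $\min(X_{1\setminus 2},X_{2\setminus 1})$ has tail $\overline{F}^2_{\alpha_{\mathrm{res}}}(x)=o\bigl(\overline{F}_{\alpha_0}(x)\bigr)$ and then \emph{cites} Foss--Korshunov (2007, Theorem 9) for the convolution-tail equivalence $\overline{F_{\alpha_0}\star F_{\min}}(x)\sim\overline{F}_{\alpha_0}(x)$, after which the ratio property gives $\chi=\alpha_0/\alpha$ in one line. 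You instead prove that same equivalence from scratch: the lower bound via long-tailedness of $F_{\alpha_0}$, and the upper bound via the splitting at $t=u-K$ together with the inequality $\overline{R}(u-t)^2\le\overline{R}(K)\,\overline{R}(u-t)$, whose integral collapses to the single marginal tail $\overline{F}_\alpha(u)$. What the paper's route buys is brevity; what yours buys is self-containedness and transparency about which hypotheses do the work (only long-tailedness of $F_{\alpha_0}$, monotonicity, and the ratio property — the full force of the cited theorem is never needed). Your treatment is also more careful at the edges: the degenerate case $\alpha_0=0$, where the paper's ratio $\overline{F}_{\min}(x)/\overline{F}_{\alpha_0}(x)$ is meaningless because $X_{12}\equiv 0$, is checked directly in your argument, and you flag the atom issue in the copula-scale reduction, which the paper passes over silently.
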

\noindent The value of $\chi(X_1,X_2)$ can be interpreted as a variant of the set correlation function of $A$  when $A$ is shifted by the vector $(s,0)$. Interestingly, the resulting tail correlation function $\chi(s)$ does not depend on the distribution of $L$ except for the property of  subexponentiality. 
Provided that $L'$ has finite second moment, the linear correlation function and the tail correlation coincide for L\'evy indicator convolutions, which transcribes the fact that co-movements in heavy-tailed variables are strongly determined by common additive components and tail events in particular.  Next, we present a result for the gamma-tailed L\'evy bases, which possess asymptotic independence in the convolution process $X(s)$. 
\begin{proposition}[Asymptotic independence in gamma-tailed L\'evy indicator  convolutions]\label{prop:gamma}
	Suppose that $L$ is a gamma-tailed L\'evy basis where distributions $F_\alpha$  have rate parameter $\beta>0$, shape parameter $\alpha>0$ and slowly varying function $\ell_\alpha$ as defined in  \eqref{eq:exptype1}.
	With  notations as in \eqref{eq:bvrep} and \eqref{eq:alpharep}, the variables $X_1$ and $X_2$ are asymptotically independent if $s\not=0$, i.e. $\chi(s)=\delta_0(s)$, and we have $\overline{\chi}(X_1,X_2)=1$. Moreover, if $X_1$ and $X_2$ are nonnegative, then 
\begin{equation}\label{eq:expcond}
\mathrm{pr}(X_2>x\mid X_1>x) \sim \mathbb{E}\exp(\beta\min(X_{1\setminus 2},X_{2\setminus 1}))\frac{\Gamma(\alpha)}{\beta\Gamma(\alpha_0)\Gamma(\alpha_{\mathrm{res}})\ell_{\mathrm{res}}(x)} x^{-\alpha_{\mathrm{res}}}.
\end{equation}
\end{proposition}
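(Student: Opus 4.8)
The plan is to exploit the additive decomposition \eqref{eq:bvrep} to collapse the joint exceedance event into a single exceedance of the common component $X_{12}$, and then to invoke the exponential-tail property of the gamma-tailed basis. The starting observation is that, since $X_1=X_{12}+X_{1\setminus 2}$ and $X_2=X_{12}+X_{2\setminus 1}$ with the three summands independent, both exceedances hold if and only if $X_{12}$ lies above the larger of the two thresholds $x-X_{1\setminus 2}$ and $x-X_{2\setminus 1}$, i.e.
\[
\{X_1>x,\,X_2>x\}=\{X_{12}>x-M\},\qquad M=\min(X_{1\setminus 2},X_{2\setminus 1}),
\]
where $M$ is independent of $X_{12}$. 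Hence $\mathrm{pr}(X_1>x,X_2>x)=\mathbb{E}\,\overline{F}_{\alpha_0}(x-M)$, an expectation over the law of $M$ alone. This clean reduction is the conceptual heart of the argument.

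The workhorse step is to prove $\mathrm{pr}(X_1>x,X_2>x)\sim \overline{F}_{\alpha_0}(x)\,\mathbb{E}\exp(\beta M)$. Pointwise in $M$, the exponential-tail property gives $\overline{F}_{\alpha_0}(x-M)/\overline{F}_{\alpha_0}(x)\to\exp(\beta M)$, so I divide by $\overline{F}_{\alpha_0}(x)$ and pass to the limit inside the expectation. To justify this I would use a uniform (Potter-type) bound valid for the class $\mathcal{L}(\beta)$ of exponential-tailed distributions, dominating the ratio by $C\exp((\beta+\varepsilon)M)$ on the event $\{M\le x\}$, and treating the region $\{M>x\}$ separately through $\mathrm{pr}(M>x)/\overline{F}_{\alpha_0}(x)\to 0$. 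The dominating term is integrable because $M$ is a minimum of two independent residuals with tail rate $\beta$, so $\overline{F}_M(t)=\overline{F}_{\alpha_{\mathrm{res}}}(t)^2$ has rate $2\beta$ and $\mathbb{E}\exp((\beta+\varepsilon)M)<\infty$ for small $\varepsilon$; nonnegativity of $X_1,X_2$ keeps $M\ge 0$ and makes all these bounds clean. Dominated convergence then delivers the equivalence.

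The three assertions follow by combining this equivalence with the marginal tail $\overline{F}_\alpha(x)\sim\ell_\alpha(x)x^{\alpha-1}\exp(-\beta x)$ and with $\alpha=\alpha_0+\alpha_{\mathrm{res}}$. For the tail correlation, $\chi=\lim_x \mathrm{pr}(X_1>x,X_2>x)/\overline{F}_\alpha(x)=\mathbb{E}\exp(\beta M)\lim_x \overline{F}_{\alpha_0}(x)/\overline{F}_\alpha(x)$, and since this ratio is asymptotic to $x^{\alpha_0-\alpha}=x^{-\alpha_{\mathrm{res}}}\to 0$ whenever $s\not=0$ (so $\alpha_{\mathrm{res}}>0$), we obtain $\chi=0$; the value $\chi=1$ at $s=0$ is immediate from $X_1=X_2$, whence $\chi(s)=\delta_0(s)$. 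For $\overline{\chi}$, both the joint and the marginal exceedance probabilities carry the same leading factor $\exp(-\beta x)$, so at the common quantile $q_x=F^{-1}(1-1/x)\to\infty$ numerator and denominator in \eqref{eq:chibar} are each asymptotic to $-\beta q_x$, giving $\overline{\chi}=1$. Finally, \eqref{eq:expcond} is obtained by writing $\mathrm{pr}(X_2>x\mid X_1>x)\sim \mathbb{E}\exp(\beta M)\,\ell_{\alpha_0}(x)/\ell_\alpha(x)\,x^{-\alpha_{\mathrm{res}}}$ and substituting the convolution identity $\ell_\alpha=\beta\,\frac{\Gamma(\alpha_0)\Gamma(\alpha_{\mathrm{res}})}{\Gamma(\alpha)}\,\ell_{\alpha_0}\,\ell_{\mathrm{res}}$, which turns $\ell_{\alpha_0}/\ell_\alpha$ into $\frac{\Gamma(\alpha)}{\beta\Gamma(\alpha_0)\Gamma(\alpha_{\mathrm{res}})\ell_{\mathrm{res}}(x)}$.

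I expect the main obstacle to be justifying that last convolution identity. It expresses the slowly varying factor of $F_\alpha=F_{\alpha_0}\star F_{\alpha_{\mathrm{res}}}$ through those of the two factors, and it relies on the fact that for gamma-tailed (equivalently, exponential-tailed with $\mathbb{E}\exp(\beta X)=\infty$) distributions \emph{both} summands contribute to the convolution tail, unlike the convolution-equivalent case where only one does. The constant $\Gamma(\alpha_0)\Gamma(\alpha_{\mathrm{res}})/\Gamma(\alpha)$ is the Beta-function factor $\int_0^1 s^{\alpha_0-1}(1-s)^{\alpha_{\mathrm{res}}-1}\,\mathrm{d}s$ arising after the substitution $t=xs$ in the convolution integral; the delicate point is controlling the endpoint contributions $s\downarrow 0$ and $s\uparrow 1$, where the slowly varying factors must be handled carefully. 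I would establish it by invoking the convolution theorem for this tail class \citep{Foss.al.2011,Pakes.2004} and verifying that its constants specialize correctly on the exact gamma basis, where $\ell_\alpha(x)\equiv\beta^{\alpha-1}/\Gamma(\alpha)$ and the identity holds by direct computation.
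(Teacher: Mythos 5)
Your proposal is correct and follows essentially the same route as the paper's proof: the same reduction of the joint exceedance to $\{X_{12}>x-M\}$ with $M=\min(X_{1\setminus 2},X_{2\setminus 1})$, the same key equivalence $\mathrm{pr}(X_1>x,\,X_2>x)\sim \mathbb{E}\exp(\beta M)\,\overline{F}_{\alpha_0}(x)$, and the same assembly of $\chi$, $\overline{\chi}$ and \eqref{eq:expcond} from the marginal tail form together with a convolution identity for the slowly varying factors. The differences lie in how the two pillars are justified. For the first, the paper does not run a dominated-convergence argument: it notes that $M$ is exponential-tailed with rate $2\beta$ and applies Breiman's lemma to $\exp(X_{12})\exp(M)$, citing \citet[Lemma 2.1]{Pakes.2004}; your Potter-bound-plus-domination argument is in effect a self-contained proof of that lemma in this setting and is sound (one small care point: the region needing separate treatment is $\{M>x-x_0\}$ for a fixed threshold $x_0$ below which the uniform bound is unavailable, not just $\{M>x\}$, but the same estimate $\mathrm{pr}(M>x-x_0)/\overline{F}_{\alpha_0}(x)\rightarrow 0$ disposes of it). For the second pillar you correctly identify the convolution identity as the crux, but your sourcing for it fails: \citet{Foss.al.2011} covers the subexponential class and \citet{Pakes.2004} the convolution-equivalent class with $\mathbb{E}\exp(\beta X)<\infty$, and the gamma-tailed case with infinite $\beta$-exponentiated moment --- where, as you yourself note, both summands contribute to the convolution tail --- is precisely the case excluded by both. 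The paper obtains this identity from \citet[Theorem 1.1]{Hashorva.Li.2014}, restated as Lemma~\ref{lem:exp}, whose conclusion \eqref{eq:exptypeconv} is literally your identity $\ell_\alpha=\beta\,\Gamma(\alpha_0)\Gamma(\alpha_{\mathrm{res}})\Gamma(\alpha)^{-1}\ell_{\alpha_0}\ell_{\mathrm{res}}$ and also confirms $\alpha=\alpha_0+\alpha_{\mathrm{res}}$. Since your Beta-integral sketch and the verification on the exact gamma basis are correct, this is a citation gap rather than a conceptual one; to close it, replace the appeal to Foss/Pakes by the Hashorva--Li theorem (or carry out in full the endpoint control you outline).
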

\noindent Although gamma-tailed indicator convolution processes are asymptotically independent according to Proposition~\ref{prop:gamma}, the bivariate joint tail decay rate with $\eta(X_1,X_2)=1$ is only moderately faster than the univariate tail decay rate. 
\begin{example}[Gamma L\'evy basis]\label{ex:gamma}
	For the gamma L\'evy basis with $F_\alpha=\Gamma(\alpha,\beta)$, we have  $\overline{\chi}(X_1,X_2)=1$ and 
	\begin{equation*}
	\mathrm{pr}(X_2>x\mid X_1>x) \sim \mathbb{E}\exp(\beta\min(X_{1\setminus 2},X_{2\setminus 1}))\frac{\Gamma(\alpha)}{\beta \Gamma(\alpha_0) }x^{-\alpha_{\mathrm{res}}}.
	\end{equation*}
\end{example}
\noindent Finally, we check the light-tailed convolution-equivalent indicator convolutions, for which asymptotic dependence arises in $X(s)$.
\begin{proposition}[Asymptotic dependence in light-tailed convolution-equivalent L\'evy indicator convolutions]\label{prop:ce}
	Suppose that $L$ is a convolution-equivalent L\'evy basis with rate parameter $\beta>0$ as defined in  \eqref{eq:exptype}.
With  notations as in \eqref{eq:bvrep} and \eqref{eq:alpharep}, the variables $X_1$ and $X_2$ are asymptotically dependent if $\alpha_0>0$.  Given the $\beta$-exponentiated moments   $m_{L'}(\beta)$  of the L\'evy seed $L'$ and $\tilde{m}(\beta)$  of $\min(X_{1\setminus 2},X_{2\setminus 1})$, we have
\begin{equation}\label{eq:conveqchi}
\chi(X_1,X_2) =\frac{\alpha_0}{\alpha} \tilde{m}(\beta) m_{L'}(\beta) ^{-\alpha_{\mathrm{res}}}. 
\end{equation}
\end{proposition}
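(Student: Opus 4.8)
The plan is to combine the independent additive decomposition \eqref{eq:bvrep} with the tail-stability of the convolution-equivalent class, which I denote $\mathcal{S}(\beta)$. Since $X_1\stackrel{d}{=}X_2\sim F_\alpha$ share a common continuous marginal, the two thresholds defining $\chi$ coincide, and the tail correlation reduces to $\chi=\lim_{u\to\infty}\mathrm{pr}(X_1>u,X_2>u)/\overline{F}_\alpha(u)$. First I would rewrite the joint survival probability using \eqref{eq:bvrep}: conditioning on the independent residuals $X_{1\setminus2}$ and $X_{2\setminus1}$, the event $\{X_1>u,X_2>u\}$ coincides with $\{X_{12}>u-\min(X_{1\setminus2},X_{2\setminus1})\}$, and since $X_{12}$ is independent of $M:=\min(X_{1\setminus2},X_{2\setminus1})$ this gives $\mathrm{pr}(X_1>u,X_2>u)=\mathrm{pr}(X_{12}+M>u)=\overline{F_{\alpha_0}\star H_M}(u)$, where $H_M$ is the law of $M$ and $X_{12}\sim F_{\alpha_0}$.

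Next I would evaluate this convolution tail. By the cited result \citep[Theorem 3.1]{Pakes.2004} the common component satisfies $F_{\alpha_0}\in\mathcal{S}(\beta)$, while $H_M$ has a strictly lighter tail, $\overline{H}_M(x)=\overline{F}_{\alpha_{\mathrm{res}}}(x)^2=o(\overline{F}_{\alpha_0}(x))$ (using $X_{1\setminus2}\stackrel{d}{=}X_{2\setminus1}\sim F_{\alpha_{\mathrm{res}}}$ together with tail comparability across volumes), and a finite $\beta$-exponentiated moment $m_{H_M}(\beta)=\tilde m(\beta)\le m_{L'}(\beta)^{\alpha_{\mathrm{res}}}<\infty$ because $M\le X_{1\setminus2}$. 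The standard convolution lemma for $\mathcal{S}(\beta)$ \citep[e.g.,][]{Watanabe.2008,Foss.al.2011} then yields $\overline{F_{\alpha_0}\star H_M}(u)\sim\tilde m(\beta)\,\overline{F}_{\alpha_0}(u)$.

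It remains to handle the denominator. Invoking the convolution-power tail asymptotic for infinitely divisible members of $\mathcal{S}(\beta)$, namely $\overline{F}_\alpha(x)\sim\alpha\,m_{L'}(\beta)^{\alpha-1}\overline{F'}(x)$, I obtain $\overline{F}_{\alpha_0}(u)/\overline{F}_\alpha(u)\to(\alpha_0/\alpha)\,m_{L'}(\beta)^{-\alpha_{\mathrm{res}}}$, where $\alpha_{\mathrm{res}}=\alpha-\alpha_0$ as in \eqref{eq:alpharep}. Multiplying the two limits produces exactly \eqref{eq:conveqchi}. Since $\alpha_0>0$ by hypothesis and every factor on the right is strictly positive, $\chi(X_1,X_2)>0$, which establishes the asymptotic dependence.

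The main obstacle is the rigorous justification of the two $\mathcal{S}(\beta)$ ingredients rather than their combination. Both rest on a local-uniform, Potter-type control of the ratio $\overline{F}_{\alpha_0}(u-t)/\overline{F}_{\alpha_0}(u)\to e^{\beta t}$ that legitimizes the dominated-convergence passage $\mathbb{E}[\overline{F}_{\alpha_0}(u-M)]\sim\overline{F}_{\alpha_0}(u)\,\mathbb{E}\,e^{\beta M}$; the finiteness of $m_{L'}(\beta)$ guaranteed by convolution-equivalence \eqref{eq:conveq} is precisely what makes the dominating envelope integrable, and the lighter tail of $H_M$ is what ensures that no second contribution (from a large residual) survives in the limit. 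I would therefore devote the bulk of the argument to verifying the hypotheses of the imported lemmas---finiteness of the relevant $\beta$-exponentiated moments and the tail comparability $\overline{F}_{\alpha_1}\asymp\overline{F}_{\alpha_2}$ for $\alpha_1,\alpha_2>0$---after which the asymptotic equivalences may be quoted directly.
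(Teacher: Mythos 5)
Your proof is correct and takes essentially the same route as the paper's: the same decomposition $\min(X_1,X_2)=X_{12}+\min(X_{1\setminus 2},X_{2\setminus 1})$, the same Breiman/$\mathcal{S}(\beta)$-convolution step giving $\mathrm{pr}(X_1>u,X_2>u)\sim \tilde{m}(\beta)\,\overline{F}_{\alpha_0}(u)$, and the same marginal tail comparison yielding the factor $(\alpha_0/\alpha)\,m_{L'}(\beta)^{-\alpha_{\mathrm{res}}}$. The only cosmetic difference is that you express the tail equivalences relative to the seed's tail $\overline{F'}(x)$, whereas the paper normalizes both tails by the L\'evy-measure tail $\eta[x,\infty)$ via Pakes' Theorem 3.1; these formulations are equivalent.
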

\noindent For an example, we consider the convolution-equivalent inverse Gaussian L\'evy basis. 
\begin{example}[Inverse Gaussian L\'evy basis]\label{ex:ig}
The inverse Gaussian L\'evy basis, for which  $F_{\alpha}=\mathrm{IG}(\lambda,\mu)$ with $L'\sim \mathrm{IG}(\lambda,\mu_0)$,  $\lambda,\mu_0>0$ and $\mu=\alpha \mu_0$,   is known to be convolution-equivalent with rate $\lambda/(2\mu_0^2)$.
	Using Proposition~\ref{prop:ce} and 	$m_{L'}(\lambda/(2\mu_0^2))=\exp(\lambda/\mu_0)$, we obtain 
	$$\chi(X_1,X_2)=\frac{\alpha_0}{\alpha}\tilde{m}(\lambda/(2\mu_0^2))\exp(-\alpha_{\mathrm{res}}\lambda/\mu_0),$$ yielding asymptotic dependence if $\alpha_0>0$.
\end{example}

Interestingly, gamma-tailed L\'evy bases yield asymptotic independence, while both the lighter-tailed convolution-equivalent bases and the heavier-tailed  subexponential bases yield asymptotic dependence. Hence,  a certain discontinuity arises in the tail dependence behavior when moving from lighter to heavier tails.  Moreover, exponential-tailed L\'evy bases can lead to both scenarios of asymptotic dependence or asymptotic independence in $X(s)$.

\section{Modeling extensions}
\label{sec:specificmodels}
We discuss useful extensions beyond stationary and isotropic modeling. 
For obtaining nonstationary continuous marginal distributions, location-dependent shifting or rescaling of the convolution process $X(s)$ are straightforward approaches, and a combination with generalized additive modeling to capture covariate effects is possible.
Nonstationary dependence can be achieved through location-dependent hypographs $A_{H_s}$, but  calculations of intersecting hypervolumes  may become more intricate.    A spatially independent component  can be added to the convolution process to account for a nugget effect or observation errors. 

Assuming isotropy is inappropriate when directional effects arise in spatial processes.  
We here adapt geometric anisotropy to the context of non-Gaussian L\'evy convolutions by supposing that the isotropic convolution model applies after a rotation and rescaling of the coordinate system.  Therefore, original coordinates $\tilde{s}$ can be transformed to isotropic coordinates $s$ using a rotation angle $\theta\in[0,\pi)$ and a stretching $b\geq 1$ along this direction,
\begin{equation}\label{eq:geomaniso}
s=\begin{pmatrix} b & 0 \\ 0 & 1\end{pmatrix}\begin{pmatrix}\cos(\theta)
& -\sin(\theta) \\ \sin(\theta) & \cos(\theta) \end{pmatrix} 
\tilde{s}.
\end{equation}
Alternatively, one may conduct the geometric anisotropy transformation on indicator sets by defining the height function $H$ as the density of an elliptically contoured probability distribution. However, calculating intersecting hypervolumes becomes more intricate and is related to the set covariance of ellipsoids.

The following two subsections focus on space-time modeling and on  hierarchical modeling based on embedding a latent gamma indicator convolution respectively. 


\subsection{Space-time modeling}
We propose two easily implementable possibilities to include temporal dependence into spatial models with data replicated in time, here with spatial dimension  $d=2$ for simplicity. The first model features a transport effect resulting from moving the hypograph through  space according to a velocity vector $v=(v_1,v_2)^T$,  such that $X(s,t)= L(s+vt+A_H)$ When the height function $H$ is isotropic and radially decreasing and $\sigma^2=\mathrm{Var}(L')$, the covariance function is
\begin{equation}
C(s,t) = \sigma^2 \int_{\mathbb{R}^2} \min\left(H(0),H(s-vt)\right)\,\mathrm{d}s=2\sigma^2\overline{G}(s-vt), 
\end{equation} 
 in analogy to \eqref{eq:covfun}. In general, $C$ is nonseparable between space and time. Our second space-time separable model is for observations on a regular time grid, indexed without loss of generality by $t=0,1,2,\ldots$  For fixed $t$, we require that the spatial process $X(s,t)$ is a L\'evy hypograph convolution with respect to a hypograph $H$.   To define the temporal innovation structure, we use an additive decomposition of each spatial process for fixed $t$ based on an iid series of spatial L\'evy convolution processes. The time-dependent process $X(s,t)$ is constructed by adding components from $t,t-1,\ldots$ to generate the process at time $t$.
 Fur this purpose, we consider a discrete-time kernel $k_T(i)$,  $i=0,1,\ldots$, normalized  such that $\sum_{i=0}^\infty k_T(i)=1$; i.e., $k_T$ is the probability mass function of a count distribution such as the Poisson, negative binomial or Zipf ones. Moreover, we assume monotonocity, $k_T(i+1)\leq k_T(i)$, which makes sense  when dependence strength decreases with time lag.  If $i_{\max}<\infty$ exists such that $k_T(i_{\max})>0$ and $k_T(i)=0$ for $i>i_{max}$, the  process $X(s,t)$ will be $i_{max}$-dependent over time.   We start with a sequence of iid L\'evy bases $L_t$, $t\in\mathbb{Z}$. Next, we define L\'evy hypograph convolutions  $\{\varepsilon_{t,i}(s)\}$, $t\in\mathbb{Z}$, $i=0,1,\ldots$, by applying the  height functions $k_T(i)H(\cdot)$ to each $L_t$, rescaled according to the time lag $i\geq 0$.  
The space-time convolution process is constructed as
\begin{equation}
X(s,t)=\sum_{i=0}^\infty \varepsilon_{t-i,i}(s).
\end{equation}
The spatial covariance function $C_S$ is associated to the hypograph $A_H$. Thanks to the monotonicity of  $k_T$,  we can calculate the separable space-time covariance function  for  $t_2>t_1$, 
\begin{align*}
C((s_1,t_1),(s_2,t_2))&= \mathrm{Cov}\left(\sum_{i=0}^\infty \varepsilon_{t_1,t_2-t_1+i}(s_1),\sum_{i=0}^\infty \varepsilon_{t_1,t_2-t_1+i}(s_2)\right)\\
&=\sum_{i=0}^\infty \mathrm{Cov}\left(\varepsilon_{t_1,t_2-t_1+i}(s_1),\varepsilon_{t_1,t_2-t_1+i}(s_2) \right)\\
&=C_S(s_1,s_2)\sum_{i=0}^\infty k_T(t_2-t_1+i) \\ &=C_S(s_1,s_2)C_T(t_2-t_1),
\end{align*}
whose temporal correlation function $C_T(t_2-t_1)=\sum_{i=0}^\infty k_T(t_2-t_1+i)$ is  the survival function of the distribution $k_T$.

\subsection{Latent Gamma process models}
For hierarchical modeling, we may embed a L\'evy convolution process for a parameter related to the central tendency of the univariate data distribution.  Specifically, a latent gamma indicator convolution process $G(s)$ provides useful hierarchical models.  Embedding $G(s)$ for the rate of an exponential distribution yields generalized Pareto margins. \citet{Bacro.al.2017} have used this to construct space-time models for asymptotically independent threshold exceedances by using indicator sets defined as slated cylinders in space-time, and closed-form expressions of bivariate distributions arise. In their construction, the  space-time gamma indicator convolution  $G(s,t)$ drives both the exceedances  and the exceedance probability $p(s,t)=\exp(-X(s,t))$. 
Embedding $G(s)$ for the mean of a Poisson distribution yields negative binomial (NB) margins  \citep{Hilbe.2011}. The NB parameter $\theta$ in \eqref{eq:nbdens} corresponds to the gamma shape parameter. Using bivariate results for negative binomial distributions derived in \citet[][Section~8.4.2]{Cameron.Trivedi.2013}, closed-form representations of bivariate probability mass functions of the hierarchical process can be obtained. 

A spatial inverted max-stable process \citep{Wadsworth.Tawn.2012} is obtained when embedding an indicator convolution process with positive $\alpha$-stable L\'evy basis ($0<\alpha<1$) for the rate of an exponential distribution. Indeed, multivariate distributions have logistic max-stable dependence \citep{Stephenson.2009}, and the resulting convolution process has structure very similar to the inverted Reich--Shaby model \citep{Reich.Shaby.2012}. 


\section{Pair-based inference}
\label{sec:inference}
Our modeling framework provides a natural link between   marginal distributions and  dependence structure, which avoids a full separation of margins and dependence such as in copula modeling. We therefore suppose that the data process can directly be modeled by  a L\'evy indicator convolution process, or after applying an easily tractable and interpretable marginal transformation, or by a latent L\'evy process model.  
Knowing  the parametric family of marginal distributions allows us to separate the estimation of L\'evy basis parameters from those related to the shape of the kernel, and approaches such as the independence likelihood \citep{Varin.al.2011} can be used to estimate the L\'evy basis parameters and to select an appropriate model. 
 If  the covariance function of
the L\'evy indicator convolution process $X(s)$ is well defined, standard geostatistical techniques are available to estimate parameters by contrasting empirical and model covariance functions \citep{Chiles.Delfiner.2009}. This method may also be used to provide good starting values for iteratively maximizing pairwise likelihood functions, which will be the focus of the remainder of this section. 



Pairwise likelihood approaches are composite
likelihood techniques, whose estimation efficiency and
asymptotic properties are 
close to classical maximum likelihood estimation under mild
conditions \citep{Varin.al.2011}. 
%
Bivariate vectors in L\'evy indicator convolutions can be expressed through three
independent components $X_{12}$,  $X_{1\setminus 2}$and $X_{2\setminus 1}$ defined in \eqref{eq:bvrep} and \eqref{eq:alpharep}.  If densities of these base variables are
 fast to compute, pairwise densities can computed by integrating out one of the components, here  chosen as $X_{12}$. If $\theta$ denotes the parameter vector to be estimated, the likelihood contribution of a pair
$(x_1,x_2)$ observed  at $s_1$ and $s_2$ amounts to
\begin{equation}\label{eq:pwlik}
 \ell(\theta;x_1,x_2)=\begin{cases}\int_\mathbb{R} f_{X_{1\setminus 2}}(x_1-y) f_{X_{2\setminus 1}}(x_2-y)  f_{X_{12}}(y)\,\mathrm{d}y, & \alpha_0\not= 0,\\
 f_{X_{1}}(x_1) f_{X_{2}}(x_2), & \alpha_0=0.
 \end{cases}
\end{equation}
In the case of a nonnegative integer-valued L\'evy basis for modeling count data,  the integral is replaced by a finite sum
 and  $\ell(x_1,x_2)$ can always be calculated exactly:
\begin{equation}\label{eq:pwlikdisc}
 \ell(\theta;x_1,x_2) =\sum_{y=0}^{\min(x_1,x_2)} f_{X_{1\setminus 2}}(x_1-y) f_{X_{2\setminus 1}}(x_2-y)  f_{X_{12}}(y).
\end{equation}
With continous L\'evy basis distributions, we
can always calculate the pairwise likelihood \eqref{eq:pwlik} through
numerical integration, while closed-form expression are available
in some cases. In some cases, variants of pairwise likelihood allow us to
bypass the calculation of the numerical integrals in \eqref{eq:pwlik},
which may be rather costly when data sets are large. The difference likelihood for the difference of variables $X(s_2)-X(s_1)=X_{1\setminus 2}-X_{2\setminus 1}$ is related to the difference of two  independent variables and may be better tractable.  For instance, with 
gamma L\'evy indicator convolutions it corresponds  to the difference of two iid gamma variables, a special
case of the variance-gamma distribution  \citep{Madan.Seneta.1990}.
A closed-form density can be written in terms of the modified Bessel function of the second
kind $K_\nu$. If $L'\sim\Gamma(\alpha',\beta)$, $\tilde{\alpha}=\alpha'\alpha_1$ and $x=x_2-x_1$,  we get
\begin{equation}\label{eq:pwdiff}
\ell_{\Gamma,\mathrm{diff}}(\theta;x)=\frac{\beta^{2\tilde{\alpha}}|x|^{\tilde{\alpha}-1/2}K_{\tilde{\alpha}-1/2}(\beta |x|)}{\sqrt{\pi}\Gamma(\tilde{\alpha}) (2\beta)^{\tilde{\alpha}-1/2}}, \qquad x\in\mathbb{R}.
\end{equation}
In general, estimation performance of such pairwise difference likelihoods remains comparable to the more classical
pairwise marginal likelihood,
see the study of \citet{Bevilacqua.Gaetan.2015} in the context of Gaussian processes. 

\section{Application examples}\label{sec:application}
\subsection{Bjertorp farm weed counts}
Figure~\ref{fig:weeds} illustrates the data consisting of weed counts for $100$ areal units of an agricultural field at the Bjertop farm in Sweden \citep{Guillot.al.2009}. Due to spatial dependence, we have no structure of independent replication for this data. The count sample has a mean of $81$ with an empirical standard deviation of $61$, hinting at strong overdispersion. For  isotropic stationary modeling, we consider L\'evy hypograph convolutions of  Poisson or negative binomial (NB) type, and we further allow for a nugget of Poisson or NB type respectively. Estimation is done by numerical maximization of the pairwise likelihood (PL) in \eqref{eq:pwlikdisc}  using all pairs. For the height function $H$, we use the  bivariate spherical probability densities of cylinder-shaped,  Gaussian, Laplace or Cauchy type; the latter is a student's t density with degree of freedom $\nu=1$ and has power-law tail implying long-range dependence. Our hypograph models have only  one parameter for the range to account for the small sample size with only one temporal replicate. For model selection, we  focus on maximum PL values since,  owing to strong spatial dependence, calculating  formal criteria such as the composite likelihood information criterion \citep[CLIC,][]{Varin.Vidoni.2005} would be intricate even when based on block bootstrap techniques. 

Models with Poisson L\'evy basis have much lower PL values (unreported) with relatively small variation between different models as compared to the NB basis; we attribute this to the strong overdispersion observed empirically, and we therefore do not report estimates for Poisson models. Table \ref{tab:app1} summarizes estimated hypograph parameters characterizing dependence and reports PL values for NB models. For each hypograph model, fitted with or without nugget, we have considered two estimation techniques: either with all parameters estimated in a single step, or with the negative binomial mean $\mu$ fixed to the empirical mean of observations when estimating the remaining parameters through PL. Table \ref{tab:app1} reports only the results for the single step estimation since  differences in log-likelihood values and estimated values between the two estimation procedures turned out to be small. A plausible explanation is that the mean parameter is very well identifiable even in our setting with relatively few data, which is also confirmed by small  differences in estimated means over all models.

Optimal log-PL values are very similar except for the Cauchy model, whose values  are by approximately $20$ lower than the others. The best model turns out to be the cylinder-shaped hypograph with nugget, whose relative additive contribution to the mean is estimated to be $0.21$. 
With Cauchy and Laplace models, finding good starting values for identifying a nugget value that improves upon the model without nugget was not possible, and we report an estimate of  approximately $0$.
The count proportion of $0.21$ explained by the nugget reduces the spatial dependence in comparison to the same model with nugget set to $0$; this effect is reflected in our estimates by a larger dependence range $42.2$ with nugget while it is $36.3$ without. Model fits suggest that hypographs with relativey low values at larger distances perform slightly better in comparison to the Laplace (exponential decay) and Cauchy (power decay) ones, suggesting that dependence is relatively strong at small distances but then decays strongly. An interpretation is  that the seeds at the origin of the observed weeds were dispersed groupwise (e.g., through moving air masses) with a typical within-group scattering range. Alternative, seeds  from existing weeds in this agricultural field may have been dispersed within a relatively small and well defined range, leading to spatial clusters of weeds and grouping patterns.  We refer to  \citet{Soubeyrand.al.2011} for more details on kernel-based modeling theory for group patterns, which can be characterized by stochastic models with spatial dependence. 

\begin{table}
\centering
\begin{tabular}{lrrrrr}
	{\bf Hypograph} & {\bf log-PL} & {\bf mean $\hat{\mu}$} & {\bf scale $\hat{\rho}$} & {\bf overdisp.} $\hat{\theta}$ & {\bf rel. nugget}  \\
	\hline\hline
\multirow{ 2}{*}{cylinder} &-53020&82.8&36.3&0.0185&$-$\\

&{\bf -53016}&82.7&42.2&0.0185&0.21\\
\hline
\multirow{ 2}{*}{Gaussian}  &-53019&82.7&22.5&0.0185&$-$\\

&-53018&82.7&24.3&0.0185&0.1\\
\hline
\multirow{ 2}{*}{Laplace} &-53038&82.8&6.14&0.0185&$-$\\

&-53038&82.8&6.14&0.0185&0\\
\hline
\multirow{ 2}{*}{Cauchy} &-53021&82.7&33.9&0.0185&$-$\\

&-53021&82.7&33.9&0.0185&0\\
\end{tabular}
\caption{Maximum pairwise log-likelihood and estimates for the negative binomial L\'evy hypograph convolution models fitted to the Bjertop weed count data.}
\label{tab:app1}
\end{table}

\begin{figure}
	\centering
	\includegraphics[width=0.5\linewidth]{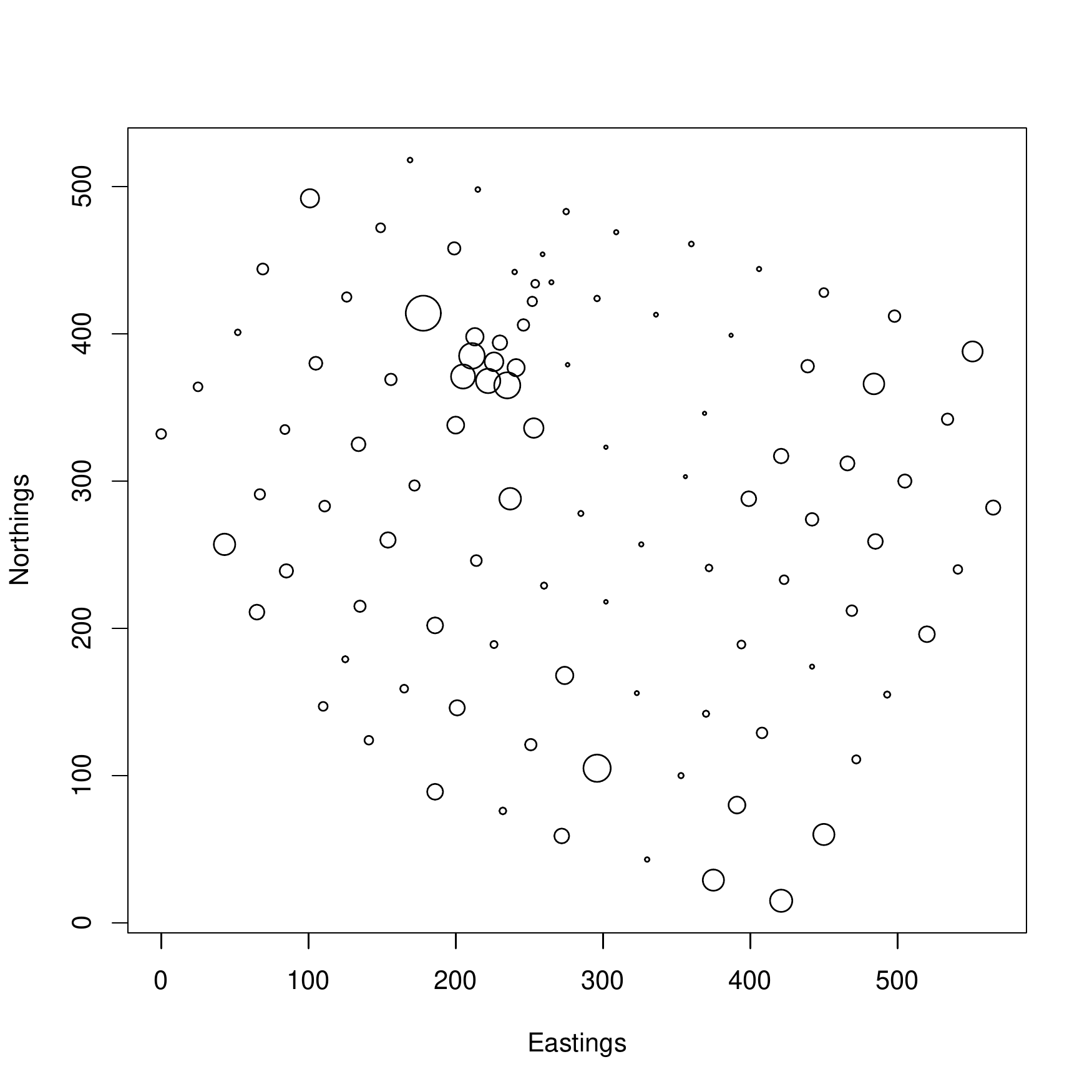}
	\caption{Spatial weed count data from the Bjertop farm (Sweden). Disc surfaces are proportional to counts.}
	\label{fig:weeds}
\end{figure}

\subsection{Daily wind speeds in the Netherlands}
We here propose spatial dependence modeling for  daily maximum wind gust data, collected between November 14, 1999 and November 13, 2008  at $30$ measurement sites spread over the Netherlands (see \url{www.knmi.nl} for public access to those data). Extreme value studies of wind speed data \citep{Ledford.Tawn.1996,Opitz.2016,Huser.al.2017} show strong support for asymptotic independence. 
For the present data, \citet{Opitz.2016} provided arguments in favor of an asymptotically independent Gaussian scale mixture model with slower joint tail decay than for Gaussian processes. In general, the Weibull distribution provides a satisfactory fit for the univariate distributions of wind speed data \citep[e.g., ][]{Stevens.Smulders.1979,Seguro.Lambert.2000,Akadaug.al.2009}, such that we here propose to power-transform a spatial gamma L\'evy convolution process with exponential margins to model the data and their spatial dependence.  As shown in Example \ref{ex:gamma}, gamma convolutions have relatively strong dependence in asymptotic dependence with coefficient $\overline{\chi}=1$, and can therefore be considered as suitable models in the light of the empirical findings of the aforementioned studies. Similar to \citet{Opitz.2016}, we use the independence likelihood to fit the marginal Weibull model with estimated Weibull index $2.63$ and the log-scale estimated as $2.69-0.0017\times \mathrm{dist}$, where $\mathrm{dist}$ is the distance to the North Sea coast. Then, data are station-wise pretransformed to a standard exponential marginal scale using the fitted Weibull distribution.

 We fix $L'\sim\Gamma(1,1)$, such that marginal distributions in the fitted hypograph convolution processes are standard exponential. We  fit hypographs of cylinder-shaped, Gaussian, Laplace, Cauchy or student's $t$ type for the dependence, and we test both isotropic or geometrically anisotropic models according to the space transformation presented in  \eqref{eq:geomaniso}. The pairwise  difference likelihood in \eqref{eq:pwdiff} is used, and estimation results are reported in Table~\ref{tab:app2}. According to the composite likelihood information criterion \citep[CLIC,][]{Varin.Vidoni.2005}, here estimated through a block bootstrap approach \citep{Carlstein.1986},  the Cauchy hypograph model comes out best, suggesting long-range dependence in wind speeds. We point out that CLIC gains due to  anisotropy tend to be more important than differences between hypographs for anisotropic models, which is in line with results of \citet{Opitz.2016} obtained for threshold exceedance data.
\begin{table}
	\centering
	\begin{tabular}{lrrrrr}
{\bf Hypograph} & {\bf scale} & {\bf nugget} & {\bf aniso:angle} $\hat{\theta}$ & {\bf aniso:scale} $\hat{b}$ & CLIC \\
\hline\hline
\multirow{ 2}{*}{cylinder}&648(42)&0.286(0.0056)&$-$($-$)&$-$($-$)&717700\\

&754(41)&0.284(0.0054)&1.24(0.06)&1.46(0.052)&717335\\
\hline
\multirow{ 2}{*}{Gaussian}&513(31)&0.286(0.0063)&$-$($-$)&$-$($-$)&717702\\

&596(38)&0.283(0.0052)&1.24(0.058)&1.46(0.049)&717435\\
\hline
\multirow{ 2}{*}{Laplace}&1160(72)&0.283(0.0047)&$-$($-$)&$-$($-$)&717539\\

&1360(89)&0.281(0.0058)&1.25(0.062)&1.47(0.058)&717493\\
\hline
\multirow{ 2}{*}{Cauchy}&402(27)&0.285(0.0056)&$-$($-$)&$-$($-$)&717707\\

&467(28)&0.283(0.0049)&1.25(0.064)&1.47(0.056)&{\bf 717320}\\
\hline
\multirow{ 2}{*}{student's $t$}&339(29)&0.284(0.0086)&$-$($-$)&$-$($-$)&717691\\

&467(28)&0.283(0.0052)&1.25(0.057)&1.46(0.053)&717345\\
%
%
%
%
%
%
%
%
%
\end{tabular}
\caption{Estimation results for gamma L\'evy hypograph convolution models fitted to the Netherlands wind speed data. Standard errors based on a block bootstrap are given in parentheses. The last columns reports the CLIC with lower values indicating better fit.  The estimate $\hat{1/\nu}$ for the inverse degree of freedom parameter of student's $t$ model is $1.83(4.58)$ for the isotropic model and $0.985(0.334)$ for the anisotropic model, implying a heavier-tailed covariance function when neglecting anisotropy in data.}
	\label{tab:app2}
\end{table}

\section{Discussion}\label{sec:discussion}
We have developed a flexible and tractable modeling framework based on L\'evy bases smoothed by indicator kernels, which allows working with distribution  properties related to tail structure and dependence  that go far beyond the ubiquitous Gaussian processes. The practical potential of such processes to bridge asymptotic dependence classes  in a natural way through the choice of the L\'evy basis family should be further studied; we refer to \citet{Wadsworth.al.2017} for some background on models covering both asymptotic dependence and independence. Moreover, sums of two convolution processes, one with bounded depedence range and asymptotic dependence, the other with asymptotic independence, would yield random field models where the asymptotic dependence range is bounded.

Spatial modeling of count data becomes feasible using pairwise likelihood, where  latent process constructions are optional but not necessary. 
With L\'evy bases suitable for count data such as the Poisson one, our approach  generalizes the multivariate models of \citet{Karlis.al.2005} and the time series models of \citet{BarndorffNielsen.al.2014} to the spatial or space-time set-up. A physical interpretation of indicator kernels is straightforward when observed count values have been aggregated over overlapping areas of the study region. 
 \citet{Wakefield.2006} states that ``there are currently no simple ways of fitting frequentist fixed-effects, nonlinear models with spatially dependent residuals". An extension of count modeling based on L\'evy indicator convolutions, used either directly or as a Poisson mean in hierarchical approaches, towards including covariates in a flexible nonstationary model appears to be a promising solution to this problem and is part of prospected work. 

More generally, using L\'evy indicator convolutions to obtain spatially dependent residuals in regression modeling paves the way towards tractable frequentist inference based on composite likelihood.
Nonstationary spatial convolution processes such as those envisaged for regression modeling can be generated by using nonstationary kernels, a nonstationary L\'evy basis or deterministic rescaling or shifting of a first-order stationary convolution process, and several of such techniques may be combined. Future work could explore the model properties and efficient inference for such models. 

Efforts should also go into fast and accurate simulation techniques, in particular for conditional simulation, which is more intricate than for the Gaussian case. This could further pave the way for simulation-based Bayesian inference of parameters, which would be an important alternative in cases where estimation uncertainty is high, such as in our application to spatial weed counts without temporal replication.


%
\appendix
\section*{Appendix}
\begin{proof}[Proof of Proposition~\ref{prop:se}]
	Using the definition of $\chi$, we get 
	\begin{equation*}
	\chi(X_1,X_2)=\lim_{x\rightarrow\infty} \frac{\overline{F_{\alpha_{ _0}} \star F_{\min(X_1,X_2)}}(x)}{F_\alpha(x)}, \qquad x\rightarrow\infty, 
	\end{equation*}
	where $\overline{F}_{\min(X_1,X_2)}(x)\sim \overline{F}^2_{\alpha_{\mathrm{res}}}(x)$.
	The tail property $\overline{F}_{\alpha_1}(x)/\overline{F}_{\alpha_2}(x)\sim \alpha_1/\alpha_2$ of subexponential distributions gives $\overline{F}_{\min(X_1,X_2)}(x)/\overline{F}_{\alpha_0}(x)\rightarrow 0$, and applying \citet[Theorem 9,][]{Foss.Korshunov.2007} then yields $\overline{F_{\alpha_{ _0}} \star F_{\min(X_1,X_2)}}(x)\sim \overline{F}_{\alpha_{ _0}}(x)$. Using this, we prove the assertion $\chi(X_1,X_2)=\alpha_0/\alpha$. 
\end{proof}

To prove the joint tail decay results in the gamma-tailed case, we first recall a result on gamma-tailed convolutions in the following Lemma~\ref{lem:exp}.
\begin{lemma}[Convolution of exponential-type random variable, see Theorem 1.1 of \citet{Hashorva.Li.2014}]\label{lem:exp}
	For two nonnegative gamma-tailed distributions $F_1,F_2$ satisfying
	\begin{equation}\label{eq:exptype}
	\overline{F}_i(x) \sim \ell_i(x) x^{\alpha_i-1}\exp(-\beta x), \qquad \beta>0,\ \alpha_i>0, \quad i=1,2,
	\end{equation}
	with slowly varying functions $\ell_i$, 
	we get 
	\begin{equation}\label{eq:exptypeconv}
	\overline{F_1\star F_2}(x)\sim \beta \frac{\Gamma(\alpha_1)\Gamma(\alpha_2)}{\Gamma(\alpha_1+\alpha_2)} \ell_1(x)\ell_2(x)x^{\alpha_1+\alpha_2-1}\exp(-\beta x), \quad x\rightarrow\infty.  
	\end{equation}
\end{lemma}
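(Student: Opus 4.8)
The plan is to compute the convolution tail by passing to densities, where the matched exponential rate $\beta$ makes the calculation transparent. The crucial structural point is that, because $F_1$ and $F_2$ decay at the \emph{same} exponential rate $\beta$, the $\beta$-exponentiated moment $m_{F_i}(\beta)=\int e^{\beta y}F_i(\mathrm{d}y)$ diverges in the gamma-tailed regime; hence the exponential-tilting (Esscher) reduction available for convolution-equivalent laws, which underlies Proposition~\ref{prop:ce}, is unavailable here, and the answer is genuinely different — a Beta-function constant must appear. First I would reduce to the density level: by the monotone density theorem applied to the tail \eqref{eq:exptype}, differentiation shows that the relevant densities satisfy $f_i(x)\sim\beta\,\ell_i(x)x^{\alpha_i-1}\exp(-\beta x)$, since among the three terms in the derivative of $\ell_i(x)x^{\alpha_i-1}\exp(-\beta x)$ the $\exp(-\beta x)$ factor dominates.

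Next, I would write the convolution density as $f_{F_1\star F_2}(x)=\int_0^x f_1(t)f_2(x-t)\,\mathrm{d}t$ and insert the asymptotics. The two exponential factors combine into $\exp(-\beta t)\exp(-\beta(x-t))=\exp(-\beta x)$, which is independent of $t$ and pulls out of the integral:
\[
f_{F_1\star F_2}(x)\sim\beta^2\exp(-\beta x)\int_0^x \ell_1(t)\,t^{\alpha_1-1}\,\ell_2(x-t)\,(x-t)^{\alpha_2-1}\,\mathrm{d}t.
\]
The substitution $t=xs$ turns the integral into $x^{\alpha_1+\alpha_2-1}\int_0^1 s^{\alpha_1-1}(1-s)^{\alpha_2-1}\ell_1(xs)\ell_2(x(1-s))\,\mathrm{d}s$. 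By the uniform convergence theorem for slowly varying functions one has $\ell_1(xs)\ell_2(x(1-s))/\{\ell_1(x)\ell_2(x)\}\to1$ on compact subsets of $(0,1)$, so the remaining integral converges to the Beta integral $\int_0^1 s^{\alpha_1-1}(1-s)^{\alpha_2-1}\,\mathrm{d}s=\Gamma(\alpha_1)\Gamma(\alpha_2)/\Gamma(\alpha_1+\alpha_2)$. This gives $f_{F_1\star F_2}(x)\sim\beta^2\,\{\Gamma(\alpha_1)\Gamma(\alpha_2)/\Gamma(\alpha_1+\alpha_2)\}\,\ell_1(x)\ell_2(x)\,x^{\alpha_1+\alpha_2-1}\exp(-\beta x)$.

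Finally I would recover the tail by integrating the density. Since $f_{F_1\star F_2}(u)$ is a regularly varying factor times $\exp(-\beta u)$, a Watson/Karamata argument localizes the integral near its lower limit and yields $\overline{F_1\star F_2}(x)=\int_x^\infty f_{F_1\star F_2}(u)\,\mathrm{d}u\sim\beta^{-1}f_{F_1\star F_2}(x)$. Dividing out this one power of $\beta$ produces exactly the claimed asymptotic \eqref{eq:exptypeconv}.

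The main obstacle is controlling the two endpoints $s=0$ and $s=1$ of the Beta integral, where the uniform convergence of the slowly varying factors fails and, when $\alpha_i<1$, the integrand carries an integrable power singularity; to justify the passage to the limit by dominated convergence I would bound the ratios $\ell_i(xs)/\ell_i(x)$ uniformly by a Potter bound. A secondary technical point is the reduction to densities: if $F_1$ or $F_2$ has no density the monotone density theorem does not apply directly, and one must instead argue at the level of tails through integration by parts, which is the more delicate route taken by \citet{Hashorva.Li.2014} and which delivers the same constant.
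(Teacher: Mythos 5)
Your computation reproduces the correct constant, but the first step --- the reduction from tails to densities --- is a genuine gap, and it cannot be repaired by the monotone density theorem. Hypothesis \eqref{eq:exptype} constrains only the tails $\overline{F}_i$; it neither implies that $F_1,F_2$ have densities nor, when they do, that these densities satisfy $f_i(x)\sim\beta\,\ell_i(x)x^{\alpha_i-1}\exp(-\beta x)$. The monotone density theorem applies to a regularly varying function written as the integral of an ultimately monotone function; here $\overline{F}_i$ is not regularly varying, and if you instead tilt and apply it to $e^{\beta x}\overline{F}_i(x)\sim\ell_i(x)x^{\alpha_i-1}$, you need ultimate monotonicity of $e^{\beta x}\{\beta\overline{F}_i(x)-f_i(x)\}$, which is an extra assumption nowhere implied by \eqref{eq:exptype}. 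A concrete counterexample: take $f(x)=c\,x^{\alpha-1}e^{-\beta x}\{1+\cos(x^2)\}$ for large $x$. Integration by parts shows the oscillatory part contributes only $O(x^{\alpha-2}e^{-\beta x})$ to the tail, so $\overline{F}(x)\sim(c/\beta)\,x^{\alpha-1}e^{-\beta x}$ satisfies \eqref{eq:exptype} with $\ell\equiv c/\beta$, yet $f$ oscillates between $0$ and twice your claimed asymptote, so the density asymptotics fail. You flag the density question at the very end as a ``secondary technical point'' concerning only the case where no density exists, but the problem persists when densities do exist; moreover, in the paper's application (Proposition~\ref{prop:gamma}) the laws of $X_{12}$ and $X_{1\setminus 2}$ are general infinitely divisible gamma-tailed distributions known only through their tails, so the tail-level argument you defer to \citet{Hashorva.Li.2014} is not a footnote --- it is the theorem.

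For context, the paper itself gives no proof of this lemma: it is imported verbatim as Theorem 1.1 of \citet{Hashorva.Li.2014}, whose argument works directly with the Stieltjes convolution $\overline{F_1\star F_2}(x)=\overline{F}_1(x)+\int_0^x\overline{F}_2(x-y)\,F_1(\mathrm{d}y)$ via integration by parts and Karamata-type estimates, precisely to avoid any density assumption. Everything downstream of your density reduction is essentially sound: the Beta-integral limit, the Potter-bound control of the endpoints (with the caveat that Potter bounds say nothing for $s<x_0/x$, so the range $t\in[0,x_0]$ must be cut off and bounded separately --- its contribution is $O(\ell_2(x)x^{\alpha_2-1}e^{-\beta x})$, negligible because $\alpha_1>0$), and the final step $\overline{F_1\star F_2}(x)\sim\beta^{-1}f_{F_1\star F_2}(x)$; the constant $\beta\,\Gamma(\alpha_1)\Gamma(\alpha_2)/\Gamma(\alpha_1+\alpha_2)$ checks out against $\Gamma(\alpha_1,\beta)\star\Gamma(\alpha_2,\beta)=\Gamma(\alpha_1+\alpha_2,\beta)$. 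So what you have is a valid proof under the stronger hypothesis that densities exist and satisfy the density analogue of \eqref{eq:exptype}; to prove the lemma as stated, the middle of the argument must be redone at the level of distribution functions.
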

\begin{proof}[Proof of Proposition~\ref{prop:gamma}] Since $\mathrm{pr}\left(\min(X_{1\setminus 2},X_{2\setminus 1})>x\right) \sim \mathrm{pr}(X_{1\setminus 2}>x)^2$, the minimum of $X_{1\setminus 2}$ and $X_{2\setminus 1}$ is exponential-tailed with rate $2\beta>\beta$. Applying Breiman's lemma  to $\exp(\min(X_1,X_2))$, see  \citet[][Lemma 2.1]{Pakes.2004}, we obtain 
	\begin{equation}\label{eq:expmin}
	\mathrm{pr}(\min(X_1,X_2)>x) \sim \mathbb{E}\exp\left(\beta \min(X_{1\setminus 2},X_{2\setminus 1})\right)\, \mathrm{pr}(X_{12}>x).
	\end{equation}
	We calculate the limit $\overline{\chi}(X_1,X_2)$ in \eqref{eq:chibar} using \eqref{eq:expmin} and \eqref{eq:exptype}, which yields $\overline{\chi}=\eta=1$. 
	Next, Lemma \ref{lem:exp} confirms that $\alpha=\alpha_0+\alpha_{\mathrm{res}}$,  and it proves 
	\begin{align}
	\mathrm{pr}(X_1>x)=\mathrm{pr}(X_{12}+X_{1\setminus 2}>x)&\sim \beta \frac{\Gamma(\alpha_0)\Gamma(\alpha_{\mathrm{res}})}{\Gamma(\alpha)}\ell_0(x)\ell_{\mathrm{res}}(x)x^{\alpha-1}\exp(-\beta x) \notag\\
	&=\beta \frac{\Gamma(\alpha_0)\Gamma(\alpha_{\mathrm{res}})}{\Gamma(\alpha)}\ell_{\mathrm{res}}(x)x^{\alpha_{\mathrm{res}}-1}\mathrm{pr}(X_{12}>x) ,\label{eq:expmarg}
	\end{align}
	with $\ell_{0}$  the slowly varying function in \eqref{eq:exptype1} of $X_{12}$.
	By injecting formulas \eqref{eq:expmarg} and \eqref{eq:expmin} into the conditional probability
	\begin{equation}
	\mathrm{pr}(X_2>x\mid X_1>x) =\frac{\mathrm{pr}\left(\min(X_1,X_2)>x\right)}{\mathrm{pr}\left(X_1>x\right)},
	\end{equation}
	Equation \eqref{eq:expcond} follows, and clearly $\chi(X_1,X_2)=0$ if $\alpha_{\mathrm{res}}>0$, that is, if $s\not=0$.
\end{proof}
\begin{proof}[Proof of Proposition~\ref{prop:ce}]
	\citet[][Theorem 3.1]{Pakes.2004} shows that the survival function of a convolution-equivalent infinitely divisible distribution $F_{\alpha}$ is tail-equivalent to its L\'evy measure $\alpha \eta(\cdot)$ (as defined by \eqref{eq:chf})  in the following sense: 
	\begin{equation}\label{eq:conveqtail}
	\overline{F}_{\alpha}(x) \sim  m_{F_{\alpha}}(\beta)\, \alpha\eta[x,\infty),
	\end{equation}
	where $m_{F_{\alpha}}(\beta)=m_{L'}(\beta)^{\alpha}$ owing to infinite divisibility. 
	In analogy to the proof of Proposition~\ref{prop:gamma} for gamma-tailed L\'evy bases,  the minimum of $X_{1\setminus 2}$ and $X_{2\setminus 1}$ is exponential-tailed with rate $2\beta>\beta$, and applying Breiman's lemma to $\exp(\min(X_1,X_2))$ gives
	\begin{equation}\label{eq:expmin2}
	\mathrm{pr}\left(\min(X_1,X_2)>x\right) \sim   \tilde{m}(\beta)\, \mathrm{pr}(X_{12}>x).
	\end{equation}
	By writing the tail representation \eqref{eq:conveqtail} of $X_{12}$ with $\alpha$ replaced by $\alpha_{\mathrm{res}}$ and combining \eqref{eq:conveqtail}  and \eqref{eq:expmin2}, the value of $\chi(X_1,X_2)$ can be calculated, 
	\begin{equation*}
	\chi(X_1,X_2)=\lim_{x\rightarrow\infty}\frac{\tilde{m}(\beta) \alpha_0\, m_{F_{\alpha_0}}(\beta)\, \eta[x,\infty)}{\alpha\, m_{F_{\alpha}}(\beta)\, \eta[x,\infty)}=\frac{\alpha_0}{\alpha} \tilde{m}(\beta) m_{L'}(\beta) ^{ -\alpha_{\mathrm{res}}},
	\end{equation*}
	where we have used $m_{F_{\alpha_0}}(\beta)=m_{L'}(\beta)^{\alpha_0}$.
\end{proof}



\bibliographystyle{CUP}
\bibliography{biblio}

\end{document}